\newtheorem{definition}{\bf Definition}
\newtheorem{lemma}{\bf Lemma}
\newtheorem{theorem}{\bf Theorem}
\newcommand{\ra}[1]{\renewcommand{\arraystretch}{#1}}
\title{A General Framework for Charger Scheduling Optimization Problems}
\date{January 15, 2020}	
\author{ {\hspace{1mm}Xual Li} \\
	Center for Advanced Computer Studies\\
	School of Computing and Informatics\\
	University of Louisiana at Lafayette\\
	Lafayette, LA 70503 \\
	\texttt{xuan.li1@louisiana.edu } \\
	\And
	{\hspace{1mm}Miao Jin} \\
	Center for Advanced Computer Studies\\
	School of Computing and Informatics\\
	University of Louisiana at Lafayette\\
	Lafayette, LA 70503 \\
	\texttt{miao.jin@louisiana.edu} \\
}
\begin{document}
\maketitle

\begin{abstract}
This paper presents a general framework to tackle a diverse range of NP-hard charger scheduling problems, optimizing the trajectory of mobile chargers to prolong the life of Wireless Rechargeable Sensor Network (WRSN),  a system consisting of sensors with rechargeable batteries and mobile chargers. Existing solutions to charger scheduling problems require problem-specific design and a trade-off between the solution quality and computing time. Instead, we observe that instances of the same type of charger scheduling problem are solved repeatedly with similar combinatorial structure but different data. We consider searching an optimal charger scheduling as a trial and error process, and the objective function of a charging optimization problem as reward, a scalar feedback signal for each search. We propose a deep reinforcement learning-based charger scheduling optimization framework. The biggest advantage of the framework is that a diverse range of domain-specific charger scheduling strategy can be learned automatically from previous experiences. A framework also simplifies the complexity of algorithm design for individual charger scheduling optimization problem. We pick three representative charger scheduling optimization problems, design algorithms based on the proposed deep reinforcement learning framework, implement them, and compare them with existing ones. Extensive simulation results show that our algorithms based on the proposed framework outperform all existing ones.
\end{abstract}

\keywords{Wireless rechargeable sensor networks \and Mobile charger scheduling \and Deep reinforcement learning }

\section{Introduction}\label{sec:intro}
Wireless Rechargeable Sensor Network (WRSN), consisting of a group of sensors with rechargeable batteries and one or multiple mobile chargers, has become a promising solution to the energy limitation problem in Wireless Sensor Networks (WSNs). However, inefficient path planning of chargers may result in not just the waste of energy but also the death of nodes and failure of network tasks, considering a mobile charger needs to travel close to a sensor node to charge. Therefore, charger scheduling optimization has become a popular research topic that focuses on optimizing the trajectory of a mobile charger to prolong the life of a WRSN system.

Charger scheduling optimization problems can be roughly classified into two groups. One provides a  budget including the total charging time or energy spent on both charging and traveling. A charger seeks a path to maximize some objective function including the total energy charged on nodes  or number of nodes charged within the budget~\cite{liang2017approximation,chen2016charge}. The other requires a number of sensor nodes to be charged. A charger seeks a path to minimize some objective function including the total energy spent on the road, or the travel distance, or the total charging time~\cite{shi2011renewable,xie2012renewable,he2014mobile,lu2015wireless,xuan2018}.

Charger scheduling optimization problems are in general NP-hard. They are tackled by exact, approximation, and heuristic algorithms. Specifically, an exact algorithm with provable optimality guarantee uses enumeration strategy but fails when the size of dataset increases.  An approximation algorithm with provable solution quality is in general desirable. However, the computational complexity of an approximation algorithm may also go sky-high when the size of dataset or requirement of the solution quality increase. A heuristic algorithm is another alternative, fast to compute although lack of optimality or quality guarantee. Overall,  existing solutions require problem-specific design and a trade-off between the solution quality requirement and computing time.  

It is obvious that existing algorithm design on charger scheduling optimization problems fails to exploit the common characteristic of these problems - instances of the same type of problem are solved repeatedly with similar combinatorial structure but different data \cite{bello2016neural, khalil2017learning}. Instead, we consider searching an optimal charger scheduling as a trial and error process, and the objective function of a charging optimization problem as reward, a scalar feedback signal for each search. Specifically, we model charger scheduling optimization problems with a weighted graph and consider the objective function as a cumulative reward of charging sensor nodes along a  path in one cycle. We then build a deep reinforcement learning - a fundamental approach combined with deep neural network to optimize strategy to maximize the cumulative reward - based charger scheduling optimization framework. The biggest advantage of the framework is that a diverse range of domain-specific charger scheduling strategy can be learned automatically from previous experiences, i.e., different graphs with various sizes. A framework also simplifies the complexity of algorithm design for individual charger scheduling optimization problem.

We introduce the framework with four steps:

\textbf{Graph construction: } We use a weighted graph to model charger scheduling optimization problems.

\textbf{Graph representation: } We apply the structure2vec technique~\cite{dai2016discriminative,khalil2017learning} to represent the weighted graph.

\textbf{Deep reinforcement learning based framework: } The framework includes all the key components of a typical reinforcement learning algorithm and functions specifically designed for charger scheduling optimization problems.

\textbf{Deep-Q-Network algorithm:} A  Deep-Q-Network (DQN) is applied to learn the policy, i.e. an optimal charger scheduling strategy.

The rest of this paper is organized as follows: Section~\ref{Sec:Related} gives a brief review of charger scheduling optimization problems and introduction of the basic concepts of deep reinforcement learning. Sections~\ref{Sec:Problem1},~\ref{Sec:Problem2}, and~\ref{Sec:Problem3} introduce  three representative charger scheduling optimization problems, respectively, where the one in~\ref{Sec:Problem3} has not been studied yet. Section~\ref{Sec:Framework} provides the detail of proposed deep reinforcement learning-based charger scheduling optimization framework and shows in steps the algorithms designed to tackle the three charging problems based on the proposed framework. Section~\ref{Sec:Simulation} presents the simulation results and performance comparison of the algorithms designed based on the proposed framework and existing ones. Section~\ref{Sec:Conclusion} concludes the paper.

\section{Related Works}
\label{Sec:Related}

\subsection{Charger Scheduling Optimization}

Charger scheduling optimization has been a popular research topic that focuses on optimizing the trajectory of a mobile charger to prolong the life of a WRSN system. Charger scheduling optimization problems can be roughly classified into two groups.

The first group is that a charger with given budget (e.g., the total charging time, or the total energy spent on charging and traveling) seeks a path to achieve a maximum objective function (e.g., the total energy charged to nodes, or the total number of nodes being charged) \cite{liang2017approximation,chen2016charge}.  In~\cite{liang2017approximation}, a charger with energy bound seeks a path to maximize the charging rewards. The authors consider two scenarios - sensors can be charged to full capacity at one time or a certain energy level by several times - and provide a $4$-approximation algorithm. In~\cite{chen2016charge}, a charger seeks a charging path to maximize the number of mobile sensor nodes charged within a given charging time. The authors discretize the moving trajectory of each sensor node by the time and provide a quasi-polynomial time approximation algorithm.

The second group is that a charger is required to charge a given set of sensor nodes and seeks a path to achieve a minimum objective function (e.g., the total energy spent on the road, the total travel distance, or the total charging time)  \cite{shi2011renewable,xie2012renewable,he2014mobile,lu2015wireless,xuan2018,ma2018charging,xu2020minimizing}.  A solution may not exist for problems of this category. In~\cite{shi2011renewable}, the authors consider a  scenario where a mobile charger periodically travels inside a network to charge each sensor node. To
maximize the ratio of the charger's vacation time over the cycle time, the authors prove that its optimal traveling path in each renewable cycle is the shortest Hamiltonian cycle and  propose a heuristic solution. In~\cite{xie2012renewable}, the authors apply  multi-node wireless energy transfer technology and extend the study in~\cite{shi2011renewable} to a scenario where multiple nodes can be charged at the same time.  In~\cite{he2014mobile}, the authors introduce a  tree-based charging schedule to minimize the travel distance of a charger in robotic sensor networks. To guarantee the charging schedule depletion free for any robot, they provide theoretical guidance on the setting of remaining energy threshold at which  robots request energy replenishment. In~\cite{lu2015wireless}, the authors consider minimizing both the travel distance of a charger and the charging delay of sensor nodes as a set of nested Traveling Salesman Problems.  In~\cite{ma2018charging}, a mobile charger charges multiple sensors simultaneously to minimize the travel distance. In \cite{xu2020minimizing}, the authors consider a multi-node charging setting where multiple neighboring sensors can be charged simultaneously by a single charger. They provide an approximation algorithm to minimize the longest delay of sensors waiting for charge.

We pick three representative problems that cover the two major groups and design issues in WRSN in Sections~\ref{Sec:Problem1},~\ref{Sec:Problem2}, and~\ref{Sec:Problem3}, respectively. We use them as concrete examples to illustrate how our deep reinforcement learning-based framework can be applied and provide solutions with superior performance in the following sections. 

\subsection{Deep Reinforcement Learning}
\label{Sec:DRL}
Given a goal-directed agent in an uncertain environment, the agent  interacts with the environment through observations, actions, and feedback (rewards) on actions \cite{sutton1998reinforcement}. At each time step $t$, the agent observes current state $s_t$, and chooses action $a$. Then the state of the environment transits to next one $s_{t+1}$ and the agent receives a reward $r_t$.  $T(s_{t+1}|s_t,a)$ is a transition probability function indicating the probability that the environment will transfer to state $s_{t+1}$ if the agent take action $a$ at state $s_{t}$.  The agent learns through previous experiences to select an action and make the expected cumulative discounted rewards $E[\sum_{t=0}^{\inf} \gamma r_t]$ in the future maximized, where $\gamma$ is the discount rate between $0$ and $1$.

The agent takes action $A=\{a_1, ..., a_n\}$ at state $S=\{s_1, ..., s_m\}$ based on a policy denoted by $\pi(S, A)$. A policy can be either deterministic or stochastic. If the action space is discrete and the policy is deterministic, we choose value-based reinforcement learning, e.g. Q-learning. If the action space is continuous and the policy is stochastic, we then choose policy-based reinforcement learning, e.g. policy-gradient.  Considering the action space of charger scheduling optimization problems is discrete, i.e., a charger decides which exact node to charge next, we choose the value-based reinforcement learning technique, Q-learning, to build the framework.

Q-learning is a model-free reinforcement learning that does not require an agent with full knowledge of the whole environment. The agent simply maintains a Q-table, which stores Q-value where Q stands for quality/reward of each state-action pair. The agent will select the action that maximizes Q in the current state. However, it is infeasible to learn all the state-action pairs for most practical problems. Therefore, function approximation technique \cite{hornik1991approximation} is commonly used. For Q-learning, a function approximator $Q(S, A;\Theta)$ is parameterized by $\Theta$ with size much smaller than the combination of all possible state-action pairs. Deep Q-Network (DQN) \cite{riedmiller2005neural} applies deep neural networks as function approximators, combined with different techniques including the experience replay method \cite{mnih2013playing}. Considering the exponentially increased size of station-action pairs in charger scheduling optimization problems, we choose DQN to build the framework for charger scheduling optimization problems in WRSNs.

\section{Mobile Network Charging Path Optimization Problem}
\label{Sec:Problem1}

A battery-powered mobile sensor usually takes a pre-designed mobility pattern  to do the jobs such as scientific exploration. The mobile network charging path optimization problem studied in~\cite{chen2016charge} seeks an optimal path for a mobile charger to charge mobile sensor nodes as many as possible within a fixed time horizon.

\subsection{Network Model}
The network model in~\cite{chen2016charge} assumes a set of battery-powered mobile sensor nodes  that need to be recharged periodically. Denote  $V = \{v_i | 1 \le i \le n\}$ as the set of nodes deployed over a planar Field of Interest (FoI) and $p_i(t) (1 \le i \le n)$ as the position of  $v_i$ at time $t$. A sensor node $v_i$ is equipped with a rechargeable battery with capacity $B$.  The trajectory of a mobile sensor node can be a curve of any form without imposing any constraint on its mobility pattern, but its trajectory (i.e., the moving position $p_i(t)$ for node $v_i$ at any $t$) is known to the charger. A mobile sensor node remains stationary when being charged.

\subsection{Problem Formulation}
Assume the moving speed of a mobile charger is faster than the upper-bound of the moving speed of a mobile sensor. A mobile charger travels from a starting point, charges mobile sensor nodes one by one to a required battery level denoted as $\alpha$ before returning to the end point. The mobile charger needs to maximize the charged sensor nodes within a maximum charging timespan denoted as $C$.  The total charging time along the charging path $P$ is denoted as $\Gamma(P)$, including the traveling time of the charger,  the charging time of the sensor nodes, and the total sojourn time when the charger stays at a position without charging any mobile sensor. The total charging time $\Gamma(P)$ needs to be less than the maximum charging  timespan $C$.

The mobile network charging path optimization problem is defined as the following.
\begin{definition}
Given a required battery level $\alpha$ and a maximum charging timespan $C$, the mobile network charging path optimization problem is to schedule an optimal  path:
\begin{equation}\label{eq:equation_mobile}
 P^* = \arg\max\limits_{\Gamma(P)\le C} |\Lambda(P)|
\end{equation}
\end{definition}
where $\Lambda(P)$ is the set of nodes charged on the path $P$.

\subsection{Problem Hardness}
The mobile network charging path optimization problem is APX-hard.

\begin{theorem}{}
The mobile network charging path optimization problem is APX-hard, or NP-hard.
\end{theorem}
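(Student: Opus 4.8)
The plan is to give an approximation‑preserving reduction (an $L$‑reduction / gap‑preserving reduction) from the \emph{orienteering problem}, which is known to be APX‑hard on general metrics: given a metric, a fixed start vertex, and a budget $B$, decide how many vertices can be visited by a walk of total length at most $B$. The charging problem of Definition~1 and orienteering share the same objective --- maximize the number of targets visited subject to a budget on the total time --- so a reduction that maps targets to sensors and preserves the set of admissible ``visiting sequences'' automatically preserves approximation ratios; essentially all the work is therefore in the construction, not in analyzing the gap.

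I would proceed in two stages. \emph{First}, the static special case --- all trajectories constant and all sensors starting depleted, so that each needs the same charging time $\tau$ --- already contains Euclidean orienteering: a charger path visiting $k$ sensors has total time equal to its Euclidean travel length divided by the charger speed plus $k\tau$ (and in an optimal solution the sojourn time may be taken to be $0$), so choosing $\tau$ small relative to the inter‑sensor distances and scaling $C$ accordingly recovers orienteering exactly. Since the decision version of Euclidean orienteering is NP‑hard (it contains Euclidean TSP), this already settles the NP‑hardness half of the statement. Note, however, that Euclidean orienteering admits a PTAS, so the static argument by itself cannot yield APX‑hardness; the mobility of the sensors is essential for the stronger claim.

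\emph{Second}, for APX‑hardness I would use the mobile sensors to realize an arbitrary metric that is \emph{not} constrained to the plane. Given a general‑metric orienteering instance with distances $d(\cdot,\cdot)$, put each sensor on its own long one‑dimensional track and let it patrol at near‑maximal sensor speed, laying out the tracks and initial phases so that going from ``just finished charging $v_i$'' to ``intercepting and charging $v_j$'' costs the charger time proportional to $d(v_i,v_j)$ up to a uniform normalization, while every out‑of‑order or out‑of‑phase interception costs at least as much. Setting $C$ to $B$ times the normalization constant plus the common charging time, a charger schedule that charges $k$ sensors within $C$ corresponds to an orienteering walk of length $\le B$ visiting $k$ vertices, and conversely; both objectives count visited vertices, so a $\rho$‑approximation transfers with the same $\rho$ and the APX‑hardness of orienteering carries over.

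The main obstacle is precisely this last construction. In the mobile model the effective ``distance'' between two sensors is time‑dependent --- it depends on when the charger reaches each of them --- so one cannot simply hard‑code a target metric; the delicate part is to choose tracks, speeds, and phases so that the cost of inserting $v_j$ immediately after $v_i$ is squeezed between $c\,d(v_i,v_j)$ and $c\,d(v_i,v_j)+o(1)$ for \emph{every} admissible schedule, all while keeping the configuration inside a planar FoI and respecting that the charger is faster than every sensor. A cleaner way to absorb part of this difficulty is to reduce instead from a problem that is intrinsically time‑dependent and already APX‑hard, such as orienteering with time windows (deadline‑TSP), for which mobile sensors give a natural encoding --- a sensor that is spatially convenient only during a short interval behaves like a target with a deadline --- at the cost of invoking a somewhat less elementary hardness result.
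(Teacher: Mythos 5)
The paper itself does not prove this theorem at all --- its ``proof'' is a one-line pointer to the reference~\cite{chen2016charge} --- so your attempt has to stand on its own. The first stage of your plan is essentially sound: freezing all trajectories, making every requesting sensor need the same charging time $\tau$, and observing that maximizing $|\Lambda(P)|$ subject to $\Gamma(P)\le C$ then contains the Euclidean path-TSP/orienteering decision problem with fixed start and end points does give NP-hardness, and this already discharges the literal disjunction ``APX-hard, or NP-hard.'' You are also right that this static argument cannot give APX-hardness, since planar orienteering admits a PTAS.

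The genuine gap is the second stage, which is exactly what would be needed for the stronger (and clearly intended) APX-hardness claim. The whole content of that claim is the construction you yourself flag as ``the main obstacle'': you would have to exhibit tracks, speeds, and phases in a planar FoI, with every sensor strictly slower than the charger, such that for \emph{every} admissible schedule the time from finishing $v_i$ to intercepting and finishing $v_j$ is squeezed between $c\,d(v_i,v_j)$ and $c\,d(v_i,v_j)+o(1)$. This is not given, and it is not a routine detail: the interception cost between two sensors depends on \emph{when} the charger arrives, so the induced ``metric'' is time-varying; a planar configuration with bounded speeds cannot obviously realize an arbitrary $n$-point metric even up to constant distortion; and you must also exclude schedules that exploit sojourn time or out-of-phase interceptions to undercut the intended distances. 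As written, the APX-hardness half is a research plan rather than a proof, and your fallback reduction from deadline-TSP (a sensor that is conveniently located only during a short window acting as a deadline) is likewise only named, not constructed, and faces the same phase-and-interception analysis. So: your static reduction essentially establishes NP-hardness, but the APX-hardness asserted in the theorem (and proved in~\cite{chen2016charge}) remains unproven in your write-up.
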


\begin{proof}
The proof can be found in the reference~\cite{chen2016charge}.
\end{proof}

\subsection{Algorithm}

A quasi-polynomial time algorithm that achieves a poly-logarithmic approximation is introduced in~\cite{chen2016charge} .  The trajectory of each mobile sensor node is discretized by a time step  $\Delta t$ to construct a directed acyclic graph. Vertices of the graph represent the discretized points along the trajectories of mobile sensor nodes, the starting and ending positions of the charger. The approximation algorithm recursively decomposes the  problem of searching an optimal charging path into sub-problems of searching sub-paths.

The computational complexity of the  algorithm is $O(n_d\min(n_d, C)\log C)^L$, where $n_d$ is the size of nodes after discretization, $C$ is the maximum charging timespan, and $L$ is the recursion level of the algorithm. Let $n$ represent the original size of mobile sensor nodes, then $n_d = n \frac{C}{\Delta t}$. For a network with large size $n$ and a long charging timespan $C$, it is obvious that the approximation algorithm has to sacrifice the solution quality by either increasing the time step size $\Delta t$ or decreasing the recursion level $L$.

\section{Fully Charging Reward Maximization Problem}
\label{Sec:Problem2}

Given a set of energy-critical sensors and a mobile charger with a fixed energy capacity,  the fully charging reward maximization problem studied in~\cite{liang2017approximation} seeks an optimal  tour for the mobile charger to  maximize the sum of charging energy rewards.

\subsection{Network Model}

The network model in~\cite{liang2017approximation}  assumes a set of stationary sensor nodes,  $V = \{v_i | 1 \le i \le n\}$, deployed over a planar FoI with locations, $P = \{p_i | 1 \le i \le n\}$. Each sensor node $v_i$ is equipped with a rechargeable battery with a capacity of $B$.

\subsection{Problem Formulation}

A mobile charger with an energy capacity $IE$ sends its departure time denoted as $t_0$ from the service station to each sensor node. When receiving the message, node $v_i$ estimates its residual energy at $t_0$ denoted as $B_i(t_0)$. If $B_i(t_0)$ is less than an energy threshold $\alpha$, i.e., $B_i(t_0)/B\le\alpha$, $v_i$ calculates a positive integer number $\pi_i(1 \le \pi_i\le n^2)$ that models the gain/reward if being charged. The reward is proportional to the amount of  energy required to charge $v_i$ to its full capacity.  Node $v_i$ then sends a charging request $(id,p_i,\pi_i,B_i(t_0))$,  including node $v_i$ ID, position $p_i$, charging reward $\pi_i$, and residual energy $B_i(t_0)$ at $t_0$, to the charger. 

The mobile charger seeks a charging path $P$ that maximizes the total  reward of nodes charged. Note that the total amount of energy consumed in charging sensor nodes and traveling along the $P$ should be no larger than the energy capacity $IE$ of the mobile charger.

The fully charging reward maximization problem is defined as the following.
\begin{definition} [Fully charging reward maximization problem]
The fully charging reward maximization problem is to schedule a charging path $P$

\begin{subequations}\label{eq:equation_fullycharging}
	\begin{align}
    &\max~ \sum_{v_i\in P} \pi_i \\
	&{\rm s.t.}~~ \sum_{v_i \in P}(B-B_i(t_0)) + \Omega(P)*\xi\le IE
	\end{align}
\end{subequations}

\end{definition}
where $\Omega(P)$ is the length of path $P$, and $\xi$ is the amount of energy consumption per traveling unit of the charger.

\subsection{Problem Hardness}
The fully charging reward maximization problem is NP-hard.

\begin{theorem}{}
The fully charging reward maximization problem is NP-hard.
\end{theorem}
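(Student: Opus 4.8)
The plan is to show that the fully charging reward maximization problem is at least as hard as a classical NP-hard routing problem, so that a polynomial-time exact algorithm for it would force $\mathrm{P}=\mathrm{NP}$. The cleanest route is to observe that the problem essentially contains the rooted orienteering problem (prize-collecting travelling salesman with a budget) as a special case: that problem asks, given points in the plane, a root, a budget $D$, and a prize at each point, for a route from the root of Euclidean length at most $D$ collecting the maximum total prize, and it is NP-hard because it already contains the Euclidean travelling salesman problem as the special case in which $D$ equals the conjectured optimal tour length and all prizes are equal. So I would reduce from the decision version of Euclidean TSP / unit-prize orienteering: is there a route through the root visiting all $n$ given points of length at most $D$?

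Given such an instance I would build a fully charging instance as follows. Put the $n$ sensors at the $n$ given points, take the root as the service station, set $\xi=1$, give every sensor the same reward $\pi_i=1$ (which lies in $[1,n^2]$, as the model requires), and give every sensor the same residual energy, chosen so that $B-B_i(t_0)=w$ for a single small positive value $w$ while still respecting $B_i(t_0)/B\le\alpha$; since all rewards and all charging deficits are then equal, the reward is trivially proportional to the charging energy, consistent with the model. Finally set $IE=D+nw$. The budget constraint $\sum_{v_i\in\Lambda(P)}(B-B_i(t_0))+\Omega(P)\,\xi\le IE$ now reads $w\,|\Lambda(P)|+\Omega(P)\le D+nw$: a route that charges all $n$ sensors is feasible exactly when its travelling length is at most $D$, while any route that charges at most $n-1$ sensors has reward at most $n-1$. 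Hence the optimum reward equals $n$ if and only if the original instance admits a route visiting all points within length $D$, and the transformation is plainly polynomial; if the formulation is read as requiring a closed tour back to the station I would simply start from Euclidean tour-TSP / cycle-orienteering instead, with the identical construction.

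The genuine difficulty is imported entirely from the NP-hardness of Euclidean routing, so no new combinatorial argument is needed; what remains is only bookkeeping. The two points I would be careful about are: (i) making sure the charging contribution $w\,|\Lambda(P)|$ to the budget cannot be traded against route length — this is why $w$ is taken negligibly small compared with the smallest gap between achievable route lengths, and why $IE$ is offset by exactly $nw$ (a convenient way to sidestep the usual sum-of-square-roots subtlety is to reduce instead from Hamiltonicity of grid graphs, where the relevant length gap is the constant $\sqrt{2}-1$); and (ii) keeping all constructed quantities consistent with the side conditions of the model — integer rewards in $[1,n^2]$, the threshold $B_i(t_0)/B\le\alpha$, and reward proportional to the charging deficit — all of which the uniform choice above satisfies. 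The only real ``obstacle'' is choosing the source problem to match whichever of the path/tour readings the formulation intends; the reference~\cite{liang2017approximation} presumably fixes one of these and carries out essentially this reduction.
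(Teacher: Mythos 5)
Your reduction is correct: embedding decision Euclidean TSP (or grid-graph Hamiltonicity, as you note, to dodge the sum-of-square-roots issue) as a unit-reward, uniform-deficit instance with $\xi=1$ and $IE=D+nw$ makes reward $n$ achievable exactly when a route of length at most $D$ through all points exists, and the path/tour ambiguity and the side conditions ($\pi_i\in[1,n^2]$, proportionality) are handled. Be aware, though, that the paper itself offers no argument for this theorem---its ``proof'' is only the citation to~\cite{liang2017approximation}---so there is nothing to compare step-by-step; your observation that the problem contains orienteering/TSP as a special case is the standard argument and is essentially what that reference relies on. Two small tidying points: the threshold condition $B_i(t_0)/B\le\alpha$ forces $w\ge(1-\alpha)B$, so rather than ``$w$ small'' simply set the residual energies to $0$ and $B=w$ (or scale $B$ accordingly); and with the exact offset $IE=D+nw$ the smallness of $w$ is not needed at all, since only the achievability of reward $n$ matters, not comparisons among partial solutions.
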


\begin{proof}
The proof is contained in the reference \cite{liang2017approximation}.
\end{proof}

\subsection{Algorithm}
A $4$-approximation algorithm is introduced in~\cite{liang2017approximation}. The basic idea of the algorithm is to reduce the fully charging reward maximization problem to the orienteering problem with an approximation algorithm introduced in~\cite{bansal2004approximation}. The orienteering problem is to find a path $P$ in a graph $G$ from a starting vertex to an ending one such that the total reward collected from vertices along $P$, $\sum_{v_i\in P} \pi_i$ is maximized, and the length of $P$ is no longer than the length of a maximum path $\Omega(P_{\max})$.

The time complexity of the $4$-approximation algorithm is $T_{\rm ori}(3|V(G)|,2(|V(G)|+|E(G)|))$, where $T_{\rm ori}(|V(G)|,|E(G)|)$ is the time complexity of the approximation algorithm in~\cite{bansal2004approximation}.

\section{Optimal k-coverage Charging Problem}
\label{Sec:Problem3}
One basic requirement of WSN deployment is full area coverage of a field of interest (FoI). Multiple coverage, where each point of the FoI is covered by at least $k$ different sensors with $k>1$ ($k$-coverage), is often applied to increase the sensing accuracy of data fusion and enhance the fault tolerance in case of node failures~\cite{Yang06onconnected,Simon07dependablek,Handbook2008,VariableRadii2009,Multiple-Coverage2011,li2015autonomous}. A common practice to achieve $k$-coverage is a high density of sensor nodes randomly distributed  over the monitored FoI~\cite{Kumar:2004:KMS:1023720.1023735,Hefeeda:2007:RKA:2931311.2931587}.

Optimal k-coverage charging problem studies the mobile charger scheduling scenario in which the $k$-coverage ability of a network system needs to be maintained. A node sends a charging request with its position information and a charging deadline estimated based on its current residual energy and battery consumption rate.  A mobile charger seeks a path to charge sensor nodes before their charging deadlines under the constraint of maintaining the $k$-coverage ability of the monitored are. At the same time, the charger tries to maximize the energy usage efficiency, i.e., minimizing the energy consumption on traveling per tour. 

The Optimal k-coverage charging problem has not been studied yet. We describe the network model, formulate the problem, analyze its hardness, and propose a dynamic programming-based algorithm in detail in the following sections.

\subsection{Network Model}

We consider a set of stationary sensor nodes,  $V = \{v_i | 1 \le i \le n\}$, deployed over a planar FoI denote as $A$  with locations, $P = \{p_i | 1 \le i \le n\}$. For each sensor node $v_i$, we assume a disk sensing model with sensing range $r$. If the Euclidean distance between a point $q \in A $ and node position $p_i$ is within distance $r$, i.e., $||p_i - q||_{L_2} \le r$, then the point $q$ is covered by sensor node $v_i$, and we use $v_i(p) = 1$ to represent it, as shown in equation \eqref{eq:equation1}:
\begin{equation}
\label{eq:equation1}
v_i(q)=
\begin{cases}
1, & ||p_i - q||_{L_2} \le r \\
0, & \text{otherwise.} \\
\end{cases}
\end{equation}

\begin{definition}[Full Coverage]
	If for any point $q \in A$, there exists at least one sensor node covering it, i.e., $\sum_{i=1}^{n}  v_i(q) \ge 1$, then area $A$ is full covered.
\end{definition}

\begin{definition}[$K$-Coverage]
	If for any point $q \in A$, there exist at least $k \ge 1$ sensor nodes covering it, i.e., $\sum_{i=1}^{n} v_i(q) \ge k$, then area $A$ is $k$-covered.
\end{definition}

It is obvious that full coverage is a special case of $k$-coverage when $k=1$.

\subsection{Problem Formulation}

 A sensor node $v_i$ is equipped with a rechargeable battery with capacity $B$. $B_i(t)$ denotes  the residual energy of sensor node $v_i$ at time $t$. A mobile charger sends its departure time denoted as $t_0$ from service station to each sensor node. When receiving the message, node $v_i$ estimates its residual energy $B_i(t_0)$ at $t_0$. If it is less than an energy threshold $\alpha$, i.e., $B_i(t_0)/B\le\alpha$, sensor node $v_i$ sends a charging request $(id,p_i,D_i)$ including its ID, position $p_i$, and charging deadline $D_i$ to the charger. A charging deadline, i.e., an energy exhausted time, is estimated based on the residual energy $B_i(t_0)$ at $t_0$ and an average battery consumption rate denoted as $\beta_i$. Specifically, $D_i = B_i(t_0)/\beta_i$. Note that nodes may have different energy consumption rates.

A mobile charger with an average moving speed $s$ is responsible for selecting and charging a number of sensor nodes before their charging deadlines to maintain  $k$-coverage of area $A$, and it also seeks a path with a minimum energy consumption on traveling. We assume that the time spent on charging path is less then the operation time of sensors, so a sensor node only needs to be charged once in each tour. Unless under an extremely dense sensor deployment, we consider that a mobile charger charges sensor nodes one by one because the energy efficiency reduces dramatically with distance; the energy efficiency drops to 45 percent when the charging distance is $2$m ($6.56$ ft)~\cite{kurs2007power}. We denote $r_c$ the energy transfer rate of a mobile charger.

The charging time is defined as the following:

\begin{definition}[Charging Time] \label{def:charging}
	Denote $P$ a charging path and $t_P(v_i)$ the charging time along $P$ at node $v_i$. If $P$ goes from nodes $v_i$ to $v_j$, the charging time begins at node $v_j$ is

	\begin{equation}
	\label{eq:equation5}
	t_P(v_j) =
	\begin{cases}
	t_P(v_i) +\frac{B-B_i(t_P(v_i))}{r_c} + \frac{d_{ij}}{s},  \\
    \quad \quad \quad \text{if } t_P(v_i) + \frac{B-B_i(t_P(v_i))}{r_c} + \frac{d_{ij}}{s} \le D_j \\
	\text{inf},  \\
    \quad \quad \quad \text{otherwise}, \\
	\end{cases}
	\end{equation}

	where  $d_{ij}$ is the Euclidean distance between nodes $v_i$ and $v_j$ and  $s$ is the average moving speed of a mobile charger. The residual energy $B_i(t)$ is estimated as $B_i(t) = B_i(t_0) - \beta_i*(t-t_0)$.
\end{definition}

The optimal $k$-coverage charging problem is then be formulated as the following.

\begin{definition} [Optimal $k$-coverage charging problem]
Given a set of sensor nodes $V = \{v_i | 1 \le i \le n\}$, randomly deployed over a planar region A with locations $P = \{p_i | 1 \le i \le n\}$ such that every point of $A$ has been at least $k$ covered initially, the optimal $k$-coverage charging problem is to schedule a charging path $P$

\begin{subequations}\label{eq:equation2}
	\begin{align}
 &\min~ |P| \\
	&{\rm s.t.}~~ \sum_{i=1}^n v_i(q) \ge k, \forall q \in A.\label{eq:equation2b}\\
	&\quad\quad t_P(v_i)\leq D_i, \forall v_i \in P.
	\end{align}
\end{subequations}

\end{definition}
Note that a charger does not need to respond all the nodes sending requests.

\subsection{Problem Hardness}
\label{Sec:hardness}

The optimal $k$-coverage charging problem is NP-hard.

\begin{theorem}{}
The optimal $k$-coverage charging problem is NP-hard.
\end{theorem}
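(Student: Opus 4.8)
The plan is to prove NP-hardness by a polynomial-time reduction from the Euclidean Traveling Salesman Problem (Euclidean TSP), whose NP-hardness is a classical fact, to the decision version of the optimal $k$-coverage charging problem: given a bound $L$, decide whether some charging path $P$ keeps every point of $A$ $k$-covered, satisfies $t_P(v_i)\le D_i$ at every charged node, and has traveling cost $|P|\le L$.

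The first ingredient is a geometric gadget that turns ``maintain $k$-coverage'' into ``charge every sensor''. Given a Euclidean TSP instance on points $p_0,p_1,\dots,p_n$, with $p_0$ designated as the service station, I would place a sensor $v_i$ at $p_i$ for $1\le i\le n$, set $k=1$, and choose the sensing radius $r$ small enough that the disks $D(p_i,r)$ are pairwise disjoint, taking the field of interest to be $A=\bigcup_{i=1}^{n}D(p_i,r)$. Each point of $A$ is then covered by exactly one sensor, so the failure of any single sensor destroys $1$-coverage; hence, since the charger must keep $A$ covered across its (repeating) tour, a feasible schedule is forced to charge all $n$ sensors, i.e., the charging tour must pass through $p_1,\dots,p_n$ in addition to $p_0$.

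The second ingredient neutralizes the timing so that only traveling cost remains. Exploiting the freedom in the residual energies $B_i(t_0)$ and consumption rates $\beta_i$, which jointly fix $D_i=B_i(t_0)/\beta_i$ subject only to $B_i(t_0)/B\le\alpha$, and taking the charging rate $r_c$ large so that charging times are negligible, I would assign every sensor a common deadline $D$ exceeding the time needed to traverse the longest Hamiltonian tour on $\{p_0,\dots,p_n\}$. Then $t_P(v_i)\le D=D_i$ holds for \emph{every} visiting order, so the deadline constraints never bind, while any sensor left uncharged still dies and breaks coverage. Consequently the feasible charging paths are precisely the tours (or Hamiltonian paths, if the charger does not return to $p_0$) through $\{p_0,\dots,p_n\}$, and minimizing $|P|$, the traveling energy, which is a fixed positive multiple of the Euclidean tour length, is exactly minimizing the length of such a tour. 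Hence a charging schedule of cost at most $L$ exists if and only if the original Euclidean TSP instance admits a tour of the corresponding length, and the reduction is clearly computable in polynomial time.

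The part that needs the most care is the geometric encoding: $r$ and $A$ must be picked so that the $k$-coverage constraint is \emph{equivalent} to visiting all nodes rather than merely implied by it, since a too-large $r$ would make some sensors redundant and break the correspondence, and one must match the open-path versus closed-tour convention of the charger model (both Euclidean TSP variants are NP-hard, so either works). The only nontrivial external fact the argument rests on is the NP-hardness of Euclidean TSP itself; if the model allowed an arbitrary metric instead of Euclidean distances, the same construction would reduce from metric TSP and the geometric-realizability subtlety would vanish.
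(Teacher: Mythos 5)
Your reduction is correct in outline, but it is a genuinely different route from the paper's. The paper works with the trivial case $k=1$ and no coverage redundancy, observes that this special case \emph{is} the traveling salesman problem with deadlines (every requesting sensor must be visited before its deadline while travel cost is minimized), and then invokes the fact that even deciding feasibility of TSP with deadlines is NP-complete (Savelsbergh); hardness thus comes from the deadline constraints themselves, which is a slightly stronger message --- for the charging problem, even recognizing whether any feasible tour exists is already hard. You instead reduce from Euclidean TSP: the disjoint-disk gadget with $k=1$ forces every sensor to be charged, and you deliberately push all deadlines (via small $\beta_i$, large $r_c$, and a common $D$ exceeding any tour plus total charging time) beyond relevance, so that feasibility is trivial and all the hardness sits in the travel-minimization objective. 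What your approach buys is geometric self-containment: the charging problem lives in the plane with Euclidean distances, and your source problem does too, so you avoid the embedding question that the paper's proof quietly glosses over when it transplants a generic TSP-with-deadlines instance into the Euclidean model; the price is reliance on the (true but heavier) NP-hardness of Euclidean TSP and a weaker conclusion, since in your instances feasibility is easy and only the optimization/decision version is shown hard. One point to state explicitly if you write this up: you need the paper's intended semantics that a requesting sensor left uncharged counts as dead for the coverage constraint (this is how the paper operationalizes constraint \eqref{eq:equation2b} in its Case I--III analysis); otherwise an adversarial reading with astronomically large deadlines would let the charger skip nodes, and your ``must charge all $n$ sensors'' step would need the deadlines to be tight rather than slack, partially re-introducing the timing issues you worked to remove.
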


\begin{proof}
To prove the NP-hardness of  optimal $k$-coverage charging problem, we prove that an NP-hard one, traveling salesman problem with deadline that seeks a minimum traveling cost to visit all the nodes before their deadlines, can be reduced to the trivial case of the optimal $k$-coverage charging problem in polynomial time. 
	
We consider a trivial case of the optimal $k$-coverage charging problem: we require $k=1$ and assume that the initial deployment of sensor nodes has no coverage redundancy. A mobile charger needs to charge all the sensor nodes sending requests before their deadlines to maintain a full coverage. It is straightforward to see that a solution of traveling salesman problem with deadline is also a solution of the trivial case of the optimal $k$-coverage charging problem and vice versa. Since even finding a feasible path for traveling salesman problem with deadline is NP-complete \cite{savelsbergh1985local},  the optimal $k$-coverage charging problem is then NP-hard.
\end{proof}

\subsection{Dynamic Programming Based Algorithm}
\label{sec:area_segmentation}

\subsubsection{Area Segmentation and Time Discretization}
Considering that the sensing region of a sensor node $v_i$ is centered at $p_i$ with radius $r_i$, disk-shape sensing regions of a network divide a planar FoI $A$ into a set of subregions, marked as $A = \{a_i | 1 \le i \le m\}$. Then $\sum_{i=1}^n v_i(a_i)$ is the number of sensors with $a_i$ within their sensing ranges. In the definition of optimal $k$-coverage charging problem, we assume $\sum_{i=1}^n v_i(a_i) \ge k$ with the initial deployment of a network. Denote $r(a_i)$  the number of sensor nodes sending charging requests with sensing regions including subregion $a_i$. Three cases exist for subregion $a_i$:

\textbf{Case I: } $\sum_{i=1}^n v_i(a_i) - r(a_i) \ge k$: all the requests are not essential.

\textbf{Case II: } $\sum_{i=1}^n v_i(a_i)= k$ and $\sum_{i=1}^n v_i(a_i) - r(a_i) < k$: a mobile charger needs to charge all the sensor nodes sending requests with their sensing regions containing subregion $a_i$  before their charging deadlines.

\textbf{Case III: } $\sum_{i=1}^n v_i(a_i)> k$ and $\sum_{i=1}^n v_i(a_i) - r(a_i) < k$: a mobile charger needs to charge at least $ k - \sum_{i=1}^n v_i(a_i) + r(a_i) $ sensor nodes sending requests with their sensing regions containing subregion $a_i$  before their charging deadlines.

A table denoted as $T$ with size $m$, is constructed to store the minimum number of sensors to charge for each $a_i$. Specifically, $T[i] = k - \sum_{i=1}^n v_i(a_i) + r(a_i)$. If the value is negative, we simply set $T[i]$ to zero.

For a sensor node $v_i$ sending request, we divide its time window $[t_0, t_0+D_i]$ into a set of time units $\{t_i^k | 0 \le k \le D_i\}$, where $t_i^0 = t_0$ and $t_i^{D_i} = D_i$. We represent node $v_i$ with a set of discretized nodes $\{v_i(t_i^k) | 0 \le k \le D_i\}$, where $v_i(t_i^k)$ represents node $v_i$ at time $t_i^k$.

\subsubsection{Graph Construction} \label{sec:General_Graph_Construction}

We construct a directed graph denoted as $G$ with vertices and edges defined as follows.

\textbf{Vertices.} The vertex set $V(G)$ includes the discretized sensor nodes sending charging requests, i.e., $\{v_i(t_i^k) |  0 \le k \le D_i\}$.

\textbf{Edges.}
There exists a directed edge $\overrightarrow{v_i(t_i^{k})v_j(t_j^{k'})}$ from $v_i(t_i^{k})$ to $v_j(t_j^{k'})$ in the edge set $E(G)$ if and only if
\[
t_i^{k}+\frac{B-B_i}{r_c} + \frac{d_{ij}}{s} > t_j^{k'-1},
\]
\[
t_i^{k}+\frac{B-B_i}{r_c}+\frac{d_{ij}}{s}\le t_j^{k'},
\]
where $k' > 0$.

A directed edge ensures that a mobile charger arrives at sensor node $v_j$ before its charging deadline with the charging time being the arrival time of the charger.

\begin{theorem}{}
	$G$ is a directed acyclic graph (DAG).
\end{theorem}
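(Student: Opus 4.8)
The plan is to show that $G$ contains no directed cycle by exhibiting a strictly increasing quantity along every directed edge, which is a standard technique for proving acyclicity of DAGs built from time-expanded graphs. The natural candidate is the discretized time label itself: with each vertex $v_j(t_j^{k'})$ associate the real number $t_j^{k'}$ (the upper endpoint of the time unit, or equivalently the absolute time corresponding to the $k'$-th discretization step of node $v_j$). I would then prove that if $\overrightarrow{v_i(t_i^{k})v_j(t_j^{k'})}$ is an edge of $G$, then $t_j^{k'} > t_i^{k}$. This is precisely where the edge-existence condition does the work.

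First I would unpack the defining inequalities of an edge. From the second condition, $t_i^{k}+\frac{B-B_i}{r_c}+\frac{d_{ij}}{s}\le t_j^{k'}$, together with the observations that the charging-time term $\frac{B-B_i}{r_c}$ is nonnegative (a node's residual energy never exceeds its capacity $B$) and the travel-time term $\frac{d_{ij}}{s}$ is strictly positive whenever $i \ne j$ (distinct sensor positions, finite speed $s>0$), I get $t_j^{k'} \ge t_i^{k}+\frac{B-B_i}{r_c}+\frac{d_{ij}}{s} > t_i^{k}$. Thus the time label is strictly larger at the head of every edge than at its tail. I should also handle the degenerate possibility of an edge from a discretized copy of a node to another discretized copy of the same node ($i=j$); here I would argue either that the construction forbids such self-transition edges, or that even if permitted, the condition $k'>0$ combined with the first inequality $t_i^{k}+\frac{B-B_i}{r_c}+\frac{d_{ij}}{s} > t_j^{k'-1}$ and monotonicity of the discretization still forces $k' > k$, hence a strictly larger label.

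With the strictly-increasing time label established, acyclicity is immediate: any directed cycle $u_0 \to u_1 \to \cdots \to u_\ell \to u_0$ would yield $\mathrm{label}(u_0) < \mathrm{label}(u_1) < \cdots < \mathrm{label}(u_\ell) < \mathrm{label}(u_0)$, a contradiction. Therefore $G$ has no directed cycle and is a DAG.

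The main obstacle, such as it is, is not conceptual but bookkeeping: I must be careful about which endpoint of the discretized time unit the vertex label refers to and make sure the chosen label is consistent with both inequalities in the edge definition, and I must be explicit that $\frac{B-B_i}{r_c}\ge 0$ and $\frac{d_{ij}}{s}>0$ (equivalently, that the charger has positive travel time between any two distinct request sites) so that the inequality is \emph{strict}. Once those small points are nailed down, the proof is a one-line monotonicity argument.
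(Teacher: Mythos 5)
Your proposal is correct and follows essentially the same route as the paper: time labels strictly increase along every directed edge (by the edge condition $t_i^{k}+\frac{B-B_i}{r_c}+\frac{d_{ij}}{s}\le t_j^{k'}$), so a directed cycle would force $t_i^k < t_j^{k'} < t_i^k$, a contradiction. You merely spell out the monotonicity step (nonnegative charging time, positive travel time, and the $i=j$ case) that the paper dismisses as obvious, which is a welcome tightening rather than a different argument.
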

\begin{proof}
	Suppose there exists a cycle in $G$. Assume vertices $v_i(t_i^k)$ and $v_j(t_j^{k'})$ are on the cycle. Along the directed path from $v_i(t_i^k)$ to $v_j(t_j^{k'})$, it is obvious that $t_i^k < t_j^{k'}$. However, along the directed path from $v_j(t_j^{k'})$ to $v_i(t_i^k)$, we have $t_j^{k'} < t_i^k$. Contradiction, so $G$ is a directed acyclic graph.
\end{proof}

\begin{definition}[Clique]
	A set of nodes $\{v_i(t_i^k) | 0 \le k \le D_i\}$ in $G$ is defined as a clique if they correspond to the same  node $v_i$ at different time units.
\end{definition}

\begin{definition}[Feasible Path]
	A path $P$ in $G$ is a feasible one if it passes no more than one vertex of a clique. At the same time, charging along $P$ satisfies the $k$-coverage requirement of the given network.
\end{definition}

\subsubsection{Dynamic Programming Algorithm}

We design a dynamic programming-based algorithm to find an optimal charging path for the $k$-coverage charging problem.

To make sure that the computed charging path passes no more than one discretized vertex of a sensor node, we apply the color coding technique introduced in~\cite{alon1995color} to assign each vertex a color. Specifically, we generate a coloring function $c_v: V \rightarrow \{1,...,n\}$ that assigns each sensor node a unique node color. Each sensor node then passes the node color to its discretized ones. A path in $G$ is said to be colorful if each vertex on it is colored by a distinct node color. It is obvious that a colorful path in $G$ passes no more than one discretized vertex of a sensor node.

To take into the consideration of traveling distance from service station to individual sensor node, we add an extra vertex denoted as $v_0$ and connect it with directed edges to vertices in $G$, i.e., $\{ v_i(t_i^k) | k = 0 \}$. The length of edge $\overrightarrow{v_0v_i(t_i^0)}$ is the Euclidean distance between the service station and sensor node $v_i$. The table $T$ constructed in Sec.~\ref{sec:area_segmentation} is stored at $v_0$.

We first topologically sort the new graph, i.e., $G + v_0$. Then we start from $v_0$ to find colorful paths by traversing from left to right in linearized order. Specifically,  $v_0$ checks neighbors connected with outgoing edges and sends table $T$ to those contributing to the decrease of at least one table entry. Once a vertex $v_i(t_i^0)$ receives $T$, $v_i(t_i^0)$ checks the subregions within its sensing range and updates the corresponding entries of $T$. $v_i(t_i^0)$ also generates a color set $C=\{ c(v_i(t_i^0)) \}$ and stores with $T$, which indicates a colorful path of length $|C|$.

Similarly,  suppose the algorithm has traversed to vertex $v_i(t_i^k)$, we check each color set $C$ stored at $v_i(t_i^k)$ and its outgoing edge $v_j(t_j^{k'})$. If $c(v_j(t_j^{k'})) \not\in C$ and charging $v_j(t_j^{k'})$ helps decrease at least one entry of $T$ associated with $C$, we add the color set $C = \{ C + c(v_j(t_j^{k'}))\}$ along with the updated $T$ to the collection of $v_j(t_j^{k'})$.

After the update of the last vertex  in linearized order, we check the stored $T$s in each node and identify those with all zero entries. A color set associated with a $T$ with all zero entries represents a colorful path that is a feasible solution of the $k$-coverage problem.

A path can be easily recovered  from a color set. The basic idea is to start from vertex $v_i(t_i^k)$ with a color set $C$. We check the stored color sets of vertices connected to $v_i(t_i^k)$ with incoming edges. Assume we identify a neighbor node $v_j(t_j^{k'})$ storing a color set $C-c(v_i(t_i^k))$, then we continue to trace back the path from $v_j(t_j^{k'})$ with a color set $C-c(v_i(t_i^k))$. When we trace back to $v_0$, we have recovered the whole charging path. Among all feasible charging paths, the one with a minimal traveling distance is the optimal one.

\begin{lemma}{}
	The algorithm returns an optimal solution of the $k$-coverage charging problem, i.e., a feasible path maximizing the energy usage efficiency, if it exists.
\end{lemma}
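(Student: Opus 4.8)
The plan is to establish two things: (i) \emph{correctness}, that every colorful path enumerated by the algorithm with an all-zero table $T$ is a feasible charging path in the sense of Definition~(Feasible Path) — it visits at most one discretized vertex per clique, it respects every charging deadline, and charging the corresponding sensors restores $k$-coverage; and (ii) \emph{optimality}, that among all feasible paths, the algorithm's bookkeeping retains at least one achieving the minimum traveling distance $|P|$. I would prove (i) first, since it reduces to unwinding the definitions, and then attack (ii) by an exchange/induction argument over the topological order.

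For (i), the colorfulness constraint handled by the color-coding function $c_v$ immediately guarantees that no two vertices on the path share a node color, hence the path meets each clique at most once; this is exactly the observation already made in the text that a colorful path in $G$ passes no more than one discretized vertex of a sensor node. The deadline constraint is built into the edge set of $G$: an edge $\overrightarrow{v_i(t_i^{k})v_j(t_j^{k'})}$ exists only when the arrival-and-charge time $t_i^{k}+\frac{B-B_i}{r_c}+\frac{d_{ij}}{s}$ lands in the half-open interval $(t_j^{k'-1},t_j^{k'}]$, which by Definition~\ref{def:charging} is precisely the condition $t_P(v_j)\le D_j$ with the charging time equal to the arrival time; so any path in $G$ automatically satisfies $t_P(v_i)\le D_i$ for all $v_i\in P$. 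Finally, the update rule only advances a color set to a neighbor when charging that neighbor decreases some entry of $T$, and $T$ was initialized so that $T[i]$ counts exactly how many additional sensors covering subregion $a_i$ must be charged to keep $\sum v_i(q)\ge k$ (Cases I--III); hence a path whose accumulated $T$ has all entries zero has charged enough sensors over every subregion to maintain $k$-coverage, which is constraint~\eqref{eq:equation2b}.

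For (ii), I would argue by induction on the linearized (topological) order of $G+v_0$ the following invariant: for every feasible colorful path $P^\star$ in the graph ending at a vertex $u$, after the algorithm has processed $u$ there is a color set $C$ stored at $u$ with $C$ equal to the color set of $P^\star$ and with the stored residual table $T_C$ dominated entrywise by the residual of $P^\star$ (i.e.\ the algorithm has done at least as much coverage work with the same colors). The base case is $v_0$ with the freshly initialized $T$. For the inductive step, if $P^\star$ reaches $u=v_j(t_j^{k'})$ via predecessor $w=v_i(t_i^k)$, then $w$ precedes $u$ in topological order, the inductive hypothesis gives a matching stored color set at $w$, the edge $\overrightarrow{wu}$ is present in $G$ by feasibility of $P^\star$, and $c(u)\notin C_w$ by colorfulness — so the algorithm's transition either already fired (if charging $u$ helps an entry) or $u$ was redundant for coverage, in which case a shorter subpath omitting $u$ is also feasible and the argument is routed through that. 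Running the invariant out to the last vertex: any optimal feasible path is in particular a colorful one on vertices of $G+v_0$ whose residual is all-zero, so its color set (or that of an equally-short feasible subpath) is stored with an all-zero $T$ and is found in the final scan; since the path-recovery step then reconstructs it and the algorithm returns the one of minimum traveling distance, the returned path is optimal. Conversely, if no feasible path exists, no stored color set ever reaches all-zero $T$, so the algorithm correctly reports failure.

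The main obstacle is the optimality direction, specifically the subtlety that an \emph{optimal} path need not be minimal in the color-set/subpath sense: the dynamic program keys its states on (vertex, color set) pairs and keeps only one residual table per pair, so I must rule out the scenario where a shorter but "coverage-weaker" path is stored in place of the optimal one and later fails to extend to an all-zero table. Resolving this requires the domination part of the invariant (the stored $T_C$ is always entrywise $\le$ the true residual of any feasible path with that color set reaching that vertex) together with the monotonicity of the update rule, and care that when two feasible paths share a color set the algorithm keeps the one whose residual is componentwise smaller — or, if the algorithm as described keeps all such tables, that the union over stored tables still dominates. I expect this to be the one place where the informal algorithm description must be pinned down precisely before the induction goes through cleanly; everything else is bookkeeping against the definitions already in the paper.
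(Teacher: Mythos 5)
Your skeleton is essentially the paper's: feasibility is argued exactly as in your part (i) (colorfulness gives at most one vertex per clique, the edge condition encodes the deadline requirement of Def.~\ref{def:charging}, and an all-zero table $T$ certifies constraint~\eqref{eq:equation2b}), and optimality rests on an invariant over the topological order saying that the relevant colorful paths are all stored. Where you go further is the optimality half: the paper settles it with the single assertion that every colorful path through a vertex has been stored, which, taken literally, is false under the prune rule (an extension is recorded only if it decreases some entry of $T$); your repair --- excising a vertex that is redundant at insertion time, observing that the shortened path is still feasible (arrival times only get earlier, so the remaining vertices land on valid, earlier discretized copies), no longer by the triangle inequality, and ends with the same all-zero table --- is precisely the missing argument, so your inductive hypothesis should be phrased for paths without redundant insertions (or for no-longer subpaths) rather than verbatim equality of color sets. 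One of your worries can be dropped: the residual table is a function of the color set alone, since each charged sensor decrements the entries of the subregions it covers independently of order and of which time copy is used, so there is no coverage-weaker versus coverage-stronger ambiguity for a fixed vertex and color set and no dominance bookkeeping is needed. The loose end that remains, in your write-up and in the paper alike, is that the stored data record no traveling distances, so selecting the minimum-distance feasible path requires the recovery step to enumerate all consistent predecessors or to keep a shortest-distance label per vertex and color set; as you note, the informal algorithm must be pinned down there before the induction closes.
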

\begin{proof}
	We first show that the algorithm returns a feasible path. A path returned by the algorithm is a colorful one that guarantees the path passes a sensor node no more than once. In the meantime, charging time at each sensor node along the path is before its deadline, otherwise a directed edge along the path won't exist. Array $T$ with all zero entries makes sure that the $k$-coverage is maintained.
	
	When the algorithm has traversed to the $i^{th}$ node in linearized order, each colorful path passing through the node has been stored in the node.
\end{proof}

Note that the computational complexity of the dynamic programming algorithm can increase exponentially in the worst case because the stored color sets at a vertex can increase exponentially to the size of sensor nodes n.

\section{Charger Scheduling Optimization Framework}
\label{Sec:Framework}

We show the framework by four steps and use the three representative charger scheduling optimization problems discussed in Secs.~\ref{Sec:Problem1}, \ref{Sec:Problem2}, and \ref{Sec:Problem3} as concrete examples to illustrate the algorithm design based on the framework.

\subsection{Graph Construction}\label{Sec:Graph_Construction}

Given a set of sensor nodes represented by $V = \{v_i | 1 \le i \le n\}$ deployed over a planar region with the initial positions denoted by $P = \{p_i | 1 \le i \le n\}$, the sensing range of a sensor node $v_i$ is $r$ and the sensing model is a disk. Each node $v_i$ is equipped with a rechargeable battery of capacity $B$. When the  residual energy of $v_i$ represented as $B_i(t)$ at time $t$ is below a specific threshold, the sensor node $v_i$ will send a charging request to a mobile charger. The charger collects all charging requests before leaving from base station denoted by $v_0$ at $t_0$. The average speed of the charger is $s$.

We use weighted graph $G(V, E, \omega) $ to model the charger scheduling optimization problems.  Specifically,

\textbf{Vertices.}
A vertex  $v_i \in V(G)$ represents a sensor node sending charging request.

\textbf{Edges.}
An edge $e(v_i,v_j) \in E(G)$ indicates a possible charging path of a charger from sensor node $v_i$ to  $v_j$ without the violation of any constraints.

\textbf{Edge Weight.}
The  weight $\omega(v_i,v_j)$ assigned to edge $e(v_i,v_j)$ represents the Euclidean distance of nodes $v_i$ and $v_j$.

More constraints are added to the graph model for each charger scheduling optimization problem. Specifically,

\begin{itemize}
\item \textbf{Mobile Network Charging Path Optimization Problem:} The graph is a fully connected undirected one with an edge connecting every pair of sensor nodes. The edge weight changes dynamically because the sensor nodes are mobile ones. With the assumption that the exact location of each mobile node is known to a charger at any time $t$, we discretize the time and update the graph in each time step.

\item \textbf{Fully Charging Reward Maximization Problem:} The graph is a fully connected undirected one with an edge connecting every pair of sensor nodes. Each vertex $v_i$ is additionally associated with a positive integer with range $[1, n^2 ]$  to model the gain of charging $v_i$ by a mobile charger. A sensor with less residual energy is assigned a larger prize as it needs to be charged more urgently.

\item \textbf{Optimal k-coverage Charging Problem:} 

\textbf{Vertices.}
The vertex set $V(G)$ includes all the sensor nodes, i.e.,  $\{v_i | 1 \le i \le n\}$ and a start vertex denoted as $v_0$. Each vertex $v_i$ has a deadline $D_i$. The location of $v_0$ is the service station and the deadline of $v_0$ is $\text{inf}$. Note that the deadline of sensor nodes without sending any request is set as $0$.

\textbf{Edges.}
For any sensor nodes $v_i$ and $v_j$ in $V(G)$, $d_{ij}$ is the euclidean distance between $v_i$ and $v_j$. There exists an edge $\overrightarrow{v_iv_j}$ in $V(G)$ if and only if the inequality below holds
\begin{equation}
\label{eq:equation4}
\frac{B-B_i(t_0)}{r_c}+\frac{d_{ij}}{s}\le D_j
\end{equation}
where $B_i(t_0)$ denotes as the residual energy of sensor node $i$ at charger departure time $t_0$, $r_c$ is the energy transfer rate of the mobile charger and $s$ is its average moving speed. If a charging path $P$ goes from node $v_i$ to node $v_j$, the charging time beginning at node $v_j$ is given by Def.~\ref{def:charging}.

\begin{definition}[Feasible Path]
	A path $P$ in $G$ is a feasible one if it starts from and ends at $v_0$, does not traverse repeated vertex, and charges vertices before their charging deadlines. At the same time, charging along $P$ satisfies the $k$-coverage requirement of the given network.
\end{definition}

\end{itemize}

\subsection{Graph Representation}
As we have mentioned in Sec.~\ref{Sec:DRL}, deep Q-learning applies parametrized function $Q(S, A;\Theta)$ to approximate a large-sized state-action value function $Q(S,A)$ that includes the combination of all pairs of states and actions, where $Q(S, A;\Theta)$ is parameterized by $\Theta$ with size much smaller than $Q(S, A)$. It is expected that the state-action value function $Q(S, A;\Theta)$  has well summarized the state $S$ of the current problem solving, i.e., incorporating  current partial solution including the selected vertices and their order into the graph model constructed in Sec.~\ref{Sec:Graph_Construction}. The state-action value function $Q(S, A;\Theta)$ should also have an estimation of the reward value when a new vertex is picked considered as taking an action $A$ in current state $S$. 

However, it is challenging to accurately describe both $S$ and $A$ on a graph. They may depend on the global and local statistics of the current graph. There exist different graph representation techniques. We apply  structure2vec~\cite{dai2016discriminative,khalil2017learning}, a deep learning-based graph embedding technique, to compute a p-dimensional node embedding for each vertex and a feature vector with the same dimension for the graph. The state-action value function $Q(S, A;\Theta)$ can then be defined by the computed feature vector and node embedding parameterized by $\Theta$. Parameters $\Theta$ will be learned later using deep reinforcement learning algorithm.

\subsection{Deep Reinforcement Learning Framework}
States, actions, transition, rewards, and policy are key components of any typical reinforcement learning algorithm. We add objective function, insertion function, and stop function into the deep reinforcement learning based charging framework. Before we discuss these key components of the framework, we will briefly explain some terms borrowed from reinforcement learning. An episode in the framework refers to a complete sequence of  sensor nodes charged under constraints or requirements. A step, denoted by $t$, refers to charging a single sensor node in an episode.

\textbf{Objective function:} An objective function, denoted by $f$, reflects the quality of a partial solution to a charging problem. The definition of $f$ varies from one charging problem to another, e.t., from maximizing the number of charging nodes to minimizing the traveling distance.
To keep consistent, we turn $f$ to negative when the problem requires minimizing some value.

\textbf{Insertion function:} An insertion function, denoted by $g$, is designed to insert vertex $v$ to the best position in a partial solution $s_{t+1}$, e.g., $s_{t+1}:=g(s_t,v)$.

\textbf{Stop function:}  A stop function checks the problem constraints and determines when to stop the current episode.

\textbf{States:} A state $s_t$ is an ordered list of visited vertices at time step $t$, represented as a $p$-dimensional vector using the structure2vec~\cite{dai2016discriminative,khalil2017learning} graph representation technique. It is a partial solution $s_t\subseteq V(G)$. The initial state is denoted by $s_0 = (v_0)$. A stop function will determine the termination state $s_\text{end}$.

\textbf{Actions:} Actions include all the candidate vertices at state $s_t$. Such candidacy depends on the definition of the individual charging problem. We will discuss it later. Similar to a state, an action is represented as a $p$-dimension node embedding using the structure2vec~\cite{dai2016discriminative,khalil2017learning} graph embedding technique.

\textbf{Transition:} Suppose $v$ is a candidate vertex. After taking the action $v$ at time step $t$, state $s_t$ is transitioning to  $s_{t+1} := g(s_t, v)$ where $v$ is inserted  to the best position by an insertion function $g$.

\textbf{Rewards:} A reward function, denoted by $r(s_t, v)$, reflects the change of the objective function $f$ after taking action $v$ at state $s_t$ and transitioning to state $s_{t+1}$:
\begin{equation}
r(s_t,v) = f(s_{t+1}) - f(s_{t}).
\end{equation}
It is obvious that the cumulative reward of an episode equals the objective function of the termination state, i.e. $\sum_{i=1}^{n}r(s_i,v_i) = f(s_\text{end})$ assuming the episode takes $n$  steps.

\textbf{Policy:} We choose an epsilon greedy policy. The policy will choose an action achieving the current highest reward. However, with a small probability, it will instead randomly select the next action to prevent the stuck of a local optimal solution. Specifically, at time step $t$, an action $v_t$ is selected by
\begin{equation}\label{eq:epsilon_greedy}
v_t  =
\begin{cases}
\arg\max_{v\in\bar{s}_t}Q(s_t,v;\Theta),
& \text{with probability } 1-\epsilon\\
\text{select a random vertex } v_t\in\bar{s}_t,
& \text{otherwise}
\end{cases}
\end{equation}

We will illustrate how to tailor the model for each of the selected charger scheduling optimization problems.

\subsubsection{Mobile Network Charging Path Optimization Problem:} 

\hspace{0.35cm} \textbf{Objective function: } The objective function $f$  counts the number of selected vertices, i.e., charged mobile nodes. 

\textbf{Insertion function: } The Insertion function $g$  simply inserts the  selected vertex $v$ at time step $t$ to the end of state $s_t$.  

\textbf{Stop function: } The stop function checks  $s_{t+1}$ to make sure that the total charging time  is no larger than a given maximum timespan of the charger. Otherwise, the selected vertex $v$ will be removed.  

\textbf{Actions: }  Actions at time step $t$ include all non-selected vertices in current state $s_t$. 

\textbf{Rewards: } The reward function $r(s_t,v)$ is defined as
\begin{equation}
r(s_t, v)=1
\end{equation}
because inserting one vertex to a partial solution contributes the objective function $f$ by one count.

\subsubsection{Fully Charging Reward Maximization Problem:} 

\hspace{0.35cm} \textbf{Objective function: } The objective function $f$  counts the total prizes collected from selected vertices, i.e.,  mobile nodes charged. The prize at each vertex is proportional to the energy charged to its full capacity. 

\textbf{Insertion function: } The Insertion function $g$  finds a position to insert $v$ at $s_t$ that minimizes the energy of charger spent on road, which is defined as:
\begin{equation}
\label{eq:equation7}
g(s_t,v) =
\arg\min_i^{}{\{E_{0i}+E_{it}\}}
\end{equation}
where $E_{0i}+E_{it}$ is the energy of charger spent on path when charging vertex $v$ as the $i$-th one among the selected vertices. 

\textbf{Stop function: } The stop function checks  $s_{t+1}$ to make sure that the total amount of energy consumed on sensor charging and the traveling  is no greater than the energy capacity of a mobile charger . Otherwise, the selected vertex $v$ will be removed.  

\textbf{Actions: } Actions at time step $t$ include all non-selected vertices in current state $s_t$. 

\textbf{Rewards: } The reward function $r(s_t,v)$ is defined as
\begin{equation}
\label{eq:equation7}
r(s_t,v) = E_{v}
\end{equation}
where $E_{v}$ is the energy charged to get the full capacity of $v$.

\subsubsection{Optimal k-coverage Charging Problem:} 

\hspace{0.35cm} \textbf{Objective function: } The objective function $f$  counts the total traveling distance of a charger. To maximize $f$, we turn $f$ to negative. 

\textbf{Insertion function: } The Insertion function $g$ finds a position to insert $v$ at $s_t$ that maximizes the reward  and ensures  each node charged before its deadline.  

\textbf{Stop function: }: The stop function checks whether the current charging path guarantees the $k$-coverage requirement of the area. If the requirement is satisfied or  R(S,v) is -inf, the algorithm terminates.

\textbf{States: }  A state $S$ is a partial solution $S\subseteq V(G)$, an ordered list of visited vertices. The first vertex in $S$ is $v_0$.

\textbf{Actions: } Let $\bar{S}$ contain vertices not in $S$ and has at least one edge from vertices in $S$. An action is a vertex $v$ from $\bar{S}$ returning the maximum reward. After taking the action $v$, the partial solution $S$ is updated as
\begin{equation}
S' := (S, v), \text{where } v = \arg\max_{v\in\bar{S}}Q(S,v)
\end{equation}
$(S,v)$ denotes appending $v$ to the best position after $v_0$ in $S$ that introduces the least traveling distance and maintains all vertices in the new list a valid charging time.

\textbf{Rewards: } The reward function $R(S,v)$ is defined as the change of the traveling distance when taking the action $v$ and transitioning from the state $S$ to a new one $S'$. Assume $v_i$, $v_j$ are two adjacent vertex in $S$, $v_0$ is the first vertex in the $S$, and $v_t$ is the last vertex in the $S$.

The reward function $R(S,v_k)$ is defined as follows:
\begin{equation}
\label{eq:equation7}
R(S,v_k) =
\begin{cases}
-\min({d_{ik}}+{d_{kj}}-{d_{ij}},{d_{tk}}+{d_{k0}}- {d_{t0}}), & t_{S'}(v) \neq \text{inf}\\
-\text{inf}, & \text{otherwise} \\
\end{cases}
\end{equation}
where $d_{ij}$ is the euclidean distance between nodes $v_i$ and $v_j$, and $t_{S'}(v)$ is the updated charging time of  node in path formed by $S'$ after inserting the $v_k$.

\subsection{Deep-Q-Network (DQN) Algorithm}
We adopt the deep-Q-network algorithm introduced in~\cite{khalil2017learning} to learn the parameters $\Theta$ of the state-action value function $Q(S,v;\Theta)$. The adopted algorithm updates the parameters $\Theta$ every $n$ steps instead of every single step, to have a more accurate estimation of the objective function. Particularly, we apply experience replay method in~\cite{mnih2013playing} to update $\Theta$. Experience learned in each step is stored in a dataset. When we update $\Theta$, a sample batch with size $32$ is randomly selected from the dataset. Experience replay method helps break data correlation and avoid oscillations with the parameters. 

The input of the network is the $p$-dimension vector featured by graph embedding network and output is the optimal solution for the current state. The advantage of DQN is that it can handle delayed rewards, which represent the way to optimize the objective function. In each step of the algorithm, the graph embedding method will be used to update the current partial solution and the new $p$-dimension vector which contains the newest information will be used for the next step.

Algorithm~\ref{DQN_algorithm} summaries the major steps of the algorithm.

\begin{algorithm}
\caption{DQN Algorithm}
\begin{algorithmic}[1]
\STATE Initialize replay memory $\mathcal{H}$ to capacity $C$
\FOR{ each episode }
\STATE Initialize state $s_1 = (v_0)$
\FOR{ step $t = 1$ \TO $n$ }
\STATE Select $v_t$ by \eqref{eq:epsilon_greedy}
\STATE Add $v_t$ to partial solution by insertion function: $s_{t+1}:=g(s_t,v)$
\STATE Calculate reward $r(s_t,v)$
\STATE Store tuple $(s_{t},v,r(s_t,v),s_{t+1})$ to $\mathcal{H}$
\STATE Sample random batch $(s_l,v_l,r(s_l,v_l),s_{l+1})$ from $\mathcal{H}$
\STATE Update the network parameter $\Theta$ by squared loss function $(y_l - Q(s_l,v_l;\Theta))^2$,
        where
        \begin{equation}\label{eq:q_learning}
        y_l =
        \begin{cases}
        r(s_l,v)+\gamma\max_{v'} Q(s_{l+1},v';\Theta)\\
        \quad \quad \quad \text{if } s_{l+1} \text{ non-terminal and } v'\in \bar{s_l} \\
        r(s_l,v) \\
        \quad \quad \quad \text{otherwise}
        \end{cases}
        \end{equation}
\IF{$s_{t+1}$ satisfy the stop function}
\STATE Break
\ENDIF
\ENDFOR
\ENDFOR	
\end{algorithmic}
\label{DQN_algorithm}
\end{algorithm}

\section{Performance Comparison}
\label{Sec:Simulation}
We implement and compare the performance of DQN algorithms designed based on the proposed deep reinforcement learning-based charger scheduling optimization framework with existing ones on the selected charger scheduling optimization problems.

\subsection{Mobile Network Charging Path Optimization Problem}

We compare the DQN algorithm designed based on the proposed framework  with the quasi-polynomial time approximation algorithm (APP) introduced in~\cite{chen2016charge} and two other heuristic algorithms, including a greedy one and a random one. 

However, the computing time of the APP algorithm in~\cite{chen2016charge} is very sensitive to the  network size $n$, time step size $\Delta t$, and charging timespan $C$ and can increase dramatically to days with large  $n$, small $\Delta t$, or long  $C$. 

To control the running time, we choose networks with moderate sizes and a small timespan of charging $C=30$ minutes. We compare the computing time and the  number of sensors charged of these algorithms and  study the impact of varying network size and charger capacity on the performances in Secs.~\ref{sec:network_size1} and~\ref{sec:time_size1}, respectively. Note that we choose a low recursion level $L=3$ for the APP algorithm in~\cite{chen2016charge} to control the running time.

\subsubsection{Simulation Setting}

We deploy sensors by uniform random distribution with the size $n$ varying from $10$ to $30$ in an Euclidean square with size $[100, 100]m^2$. A charger starts from the center of monitored area and ends at the upper right corner. The speed of  charger is $5$m/s. We use the random way point model to simulate the trace of a mobile sensor node with its average speed randomly chosen from $[0, 2]$ m/s. 

The battery capacity $B$ of each sensor is $10.8$KJ \cite{chen2016charge}. The initial battery level of node $i$ is randomly chosen from $[0, B]$. $g_i[(1-\epsilon)\alpha]$ is the time to charge node $i$ to $(1-\epsilon)\alpha$.
The required charging level $\alpha = 90\%$, $\epsilon = 0.1$. The average energy transfer rate is $40$ W. 

\begin{figure*}
\begin{center}
\begin{tabular}{cc}
\includegraphics[height=5cm]{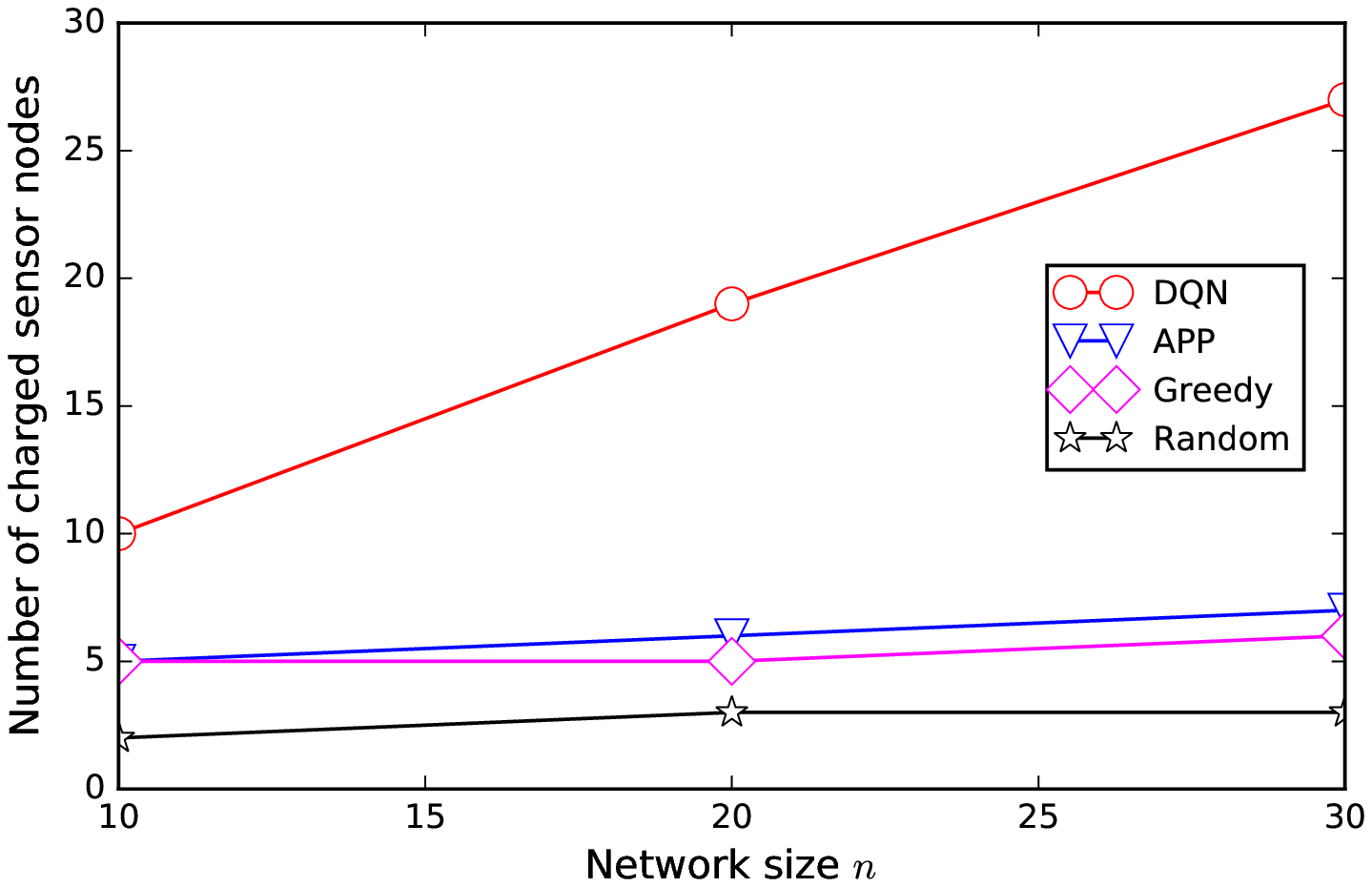} &
\includegraphics[height=5cm]{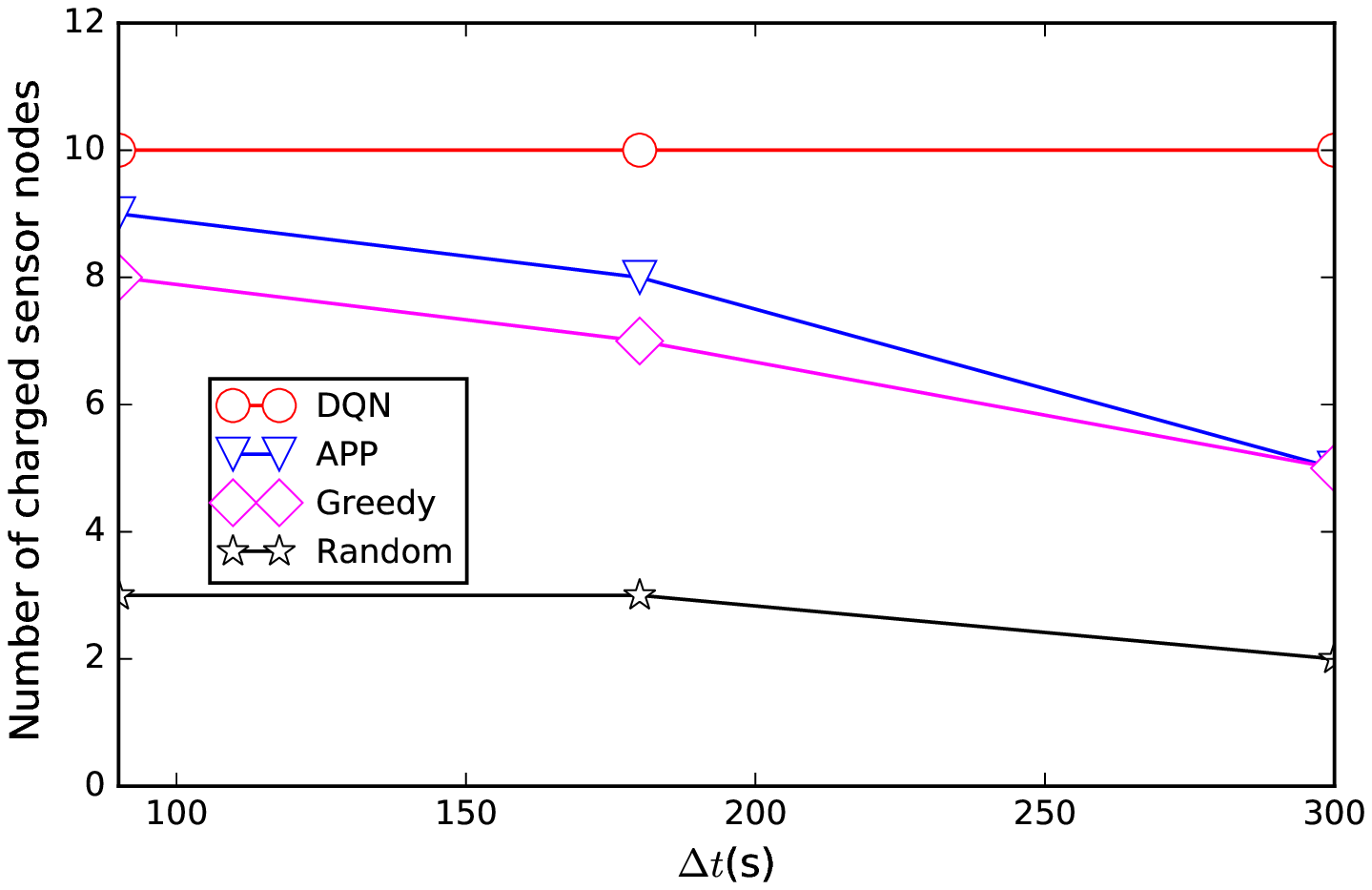} \\
(a) Impact of network size  & (b) Impact of time step size\\
\end{tabular}
\end{center}
\caption{(a) The number of  mobile sensors charged with the maximum timespan of charger $C=30 $ minutes  and $\Delta t = 300$s under a varying network size of $n$. (b) The number of  mobile sensors charged with the maximum timespan of charger $C=30 $ minutes  and $n = 10$ under a varying time step size.
\label{fig:n}}
\end{figure*}

\subsubsection{Impact of Network Size}
\label{sec:network_size1}

Figure~\ref{fig:n}(a) and Table~\ref{tab:performance} compare the number of mobile sensors charged within the maximum timespan and the corresponding computing time of each algorithm, respectively, with $\Delta t = 300$s and a varying network size of $n$. With an increased network size, more mobile sensors can be charged, but the DQN algorithm consistently charges a significantly higher number of mobile sensors than any other algorithm. At the same time, the computing time of the DQN algorithm remains stable with an increase of $n$. By contrast, the computing time of the APP algorithm in~\cite{chen2016charge} increases dramatically. Overall, the DQN algorithm outperforms all others.

\subsubsection{Impact of Time Step Size}
\label{sec:time_size1}

The time step size $\Delta t$ has a big impact on the performance of the APP algorithm in~\cite{chen2016charge} as shown in Figure~\ref{fig:n}(b) and Table \ref{tab:performance}. The number of charged mobile sensors of the APP algorithm in~\cite{chen2016charge} increases with a decreased time step size $\Delta t$ but at the cost of sky-high computing time. On the contrary, $\Delta t$ has no effect on the DQN algorithm because of the way to construct a graph as introduced in Sec.\ref{Sec:Graph_Construction}. The DQN algorithm again outperforms all other algorithms.

\begin{table}
\caption{Computing time under different network size $n$ and time stepsize $\Delta t$ \label{tab:performance}}
\begin{center}
\ra{1.3}
{    
\begin{tabular}[t]{ @{}  c | c | c | c |c  @{}} \toprule
&  \multicolumn{2}{c |}{Computing time in Fig. \ref{fig:n}(a)}  & \multicolumn{2}{c}{Computing time in Fig. \ref{fig:n}(b)}  \\
\cline{2-3}	\cline{4-5}
Alg. & $n$ & Comp. Time& $\Delta t$  & Comp. Time \\
&      & (s)   & (s)     &  (s) \\ \hline	
DQN  & \multirow{5}{*}{$10$} &  22 & \multirow{5}{*}{$90$}&  22 \\
APP  &  &  580  & &  417329\\
Greedy & & 0.01 & &  0.033\\
Random & & 0.0006 & &  0.0007\\\hline
DQN  & \multirow{5}{*}{$20$} &  53 &\multirow{5}{*}{$180$} & 22\\
APP  &  &  28011 &  & 7507\\
Greedy & &  0.044 & & 0.02\\
Random & &  0.0024 & & 0.0006\\\hline
DQN  & \multirow{5}{*}{$30$} &  70 & \multirow{5}{*}{$300$} &  22\\
APP  &  &  232080 &  &  580\\	
Greedy & & 0.11 & & 0.01\\
Random & & 0.005 & & 0.0006\\
\bottomrule
\end{tabular}
}
\end{center}
\end{table}

\subsection{Fully Charging Reward Maximization Problem}\label{Sec:Simulation2}

Paper~\cite{liang2017approximation} provides a $4$-approximation algorithm to solve the fully charging reward maximization problem by reducing the original problem to a classical orienteering one~\cite{bansal2004approximation}.

We compare the DQN algorithm designed based on the proposed framework with the approximation algorithm (APP) in~\cite{liang2017approximation}, the minimum spanning tree (MST),  the Capacitated Minimum Spanning Tree (CMST), and the greedy one. We compare the computing time and the total energy spent on charging sensors and  study the impact of varying network size and charger capacity on the performances of these algorithms in Secs.~\ref{sec:network_size2} and~\ref{sec:capacity}, respectively.

\subsubsection{Simulation Setting}

The simulation area is an Euclidean square $[1000, 1000]m^2$. We randomly deploy sensors with size $n$ ranging from $50$ to $200$. A mobile charger with energy capacity $IE = 300$KJ, travels with an average speed $5$m/s from the center of monitored area and back to it with an average $600$ J/m spent on traveling~\cite{liang2017approximation}. The battery capacity $B$ of each sensor is $10.8$KJ \cite{liang2017approximation}. The initial battery level of node $v_i$ is randomly chosen from $(0, B]$. A sensor node sends a charging request when its energy is below $20\%$ of its capacity $B$. 

\subsubsection{Impact of Network Size}
\label{sec:network_size2}

Figure~\ref{fig:fully_charging}(a) and Table \ref{tab:fully_charging} compare the energy spent on sensor charging and the corresponding computing time of each algorithm, respectively, with the network size $n$ increased from $50$ to $200$. It is clear that the DQN algorithm spends much more energy in charging sensors than any other algorithm. Considering a charger with a fixed energy capacity and a group of sensors scattered in a field,  the energy spent on charging sensors decreases with an increased network size for all algorithms. However, with the increased network size,  the computing time of the APP algorithm in~\cite{liang2017approximation}  increases dramatically while the DQN algorithm remains stable. Overall, the DQN algorithm outperforms all other algorithms.

\subsubsection{Impact of Charger Capacity}
\label{sec:capacity}

Figure~\ref{fig:fully_charging}(b) and Table \ref{tab:fully_charging} compare the energy spent on sensor charging and the corresponding computing time of each algorithm, respectively, with a varying charger capacity.  With the charger capacity $IE$ increased from $200$KJ to $350$KJ, the energy spent on charging sensor nodes increases for all algorithms too. The DQN algorithm consistently charges much more energy on sensors than any other algorithms. At the same time, the computing time of the DQN algorithm keeps stable while the APP algorithm increases dramatically.  Overall, the DQN algorithm still outperforms all other algorithms.

\begin{figure*}
\begin{center}
\begin{tabular}{cc}
\includegraphics[height=5cm]{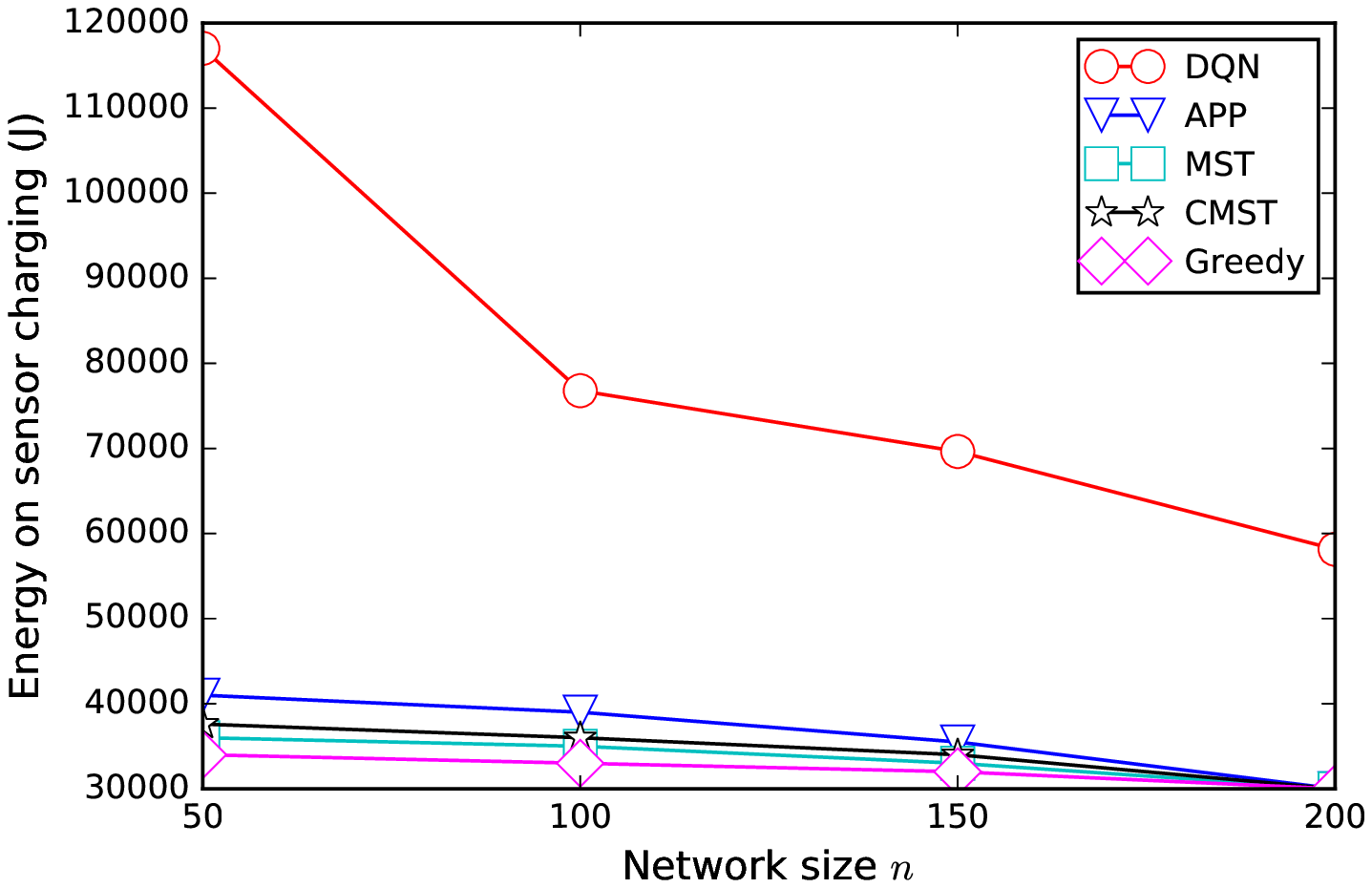} &
\includegraphics[height=5cm]{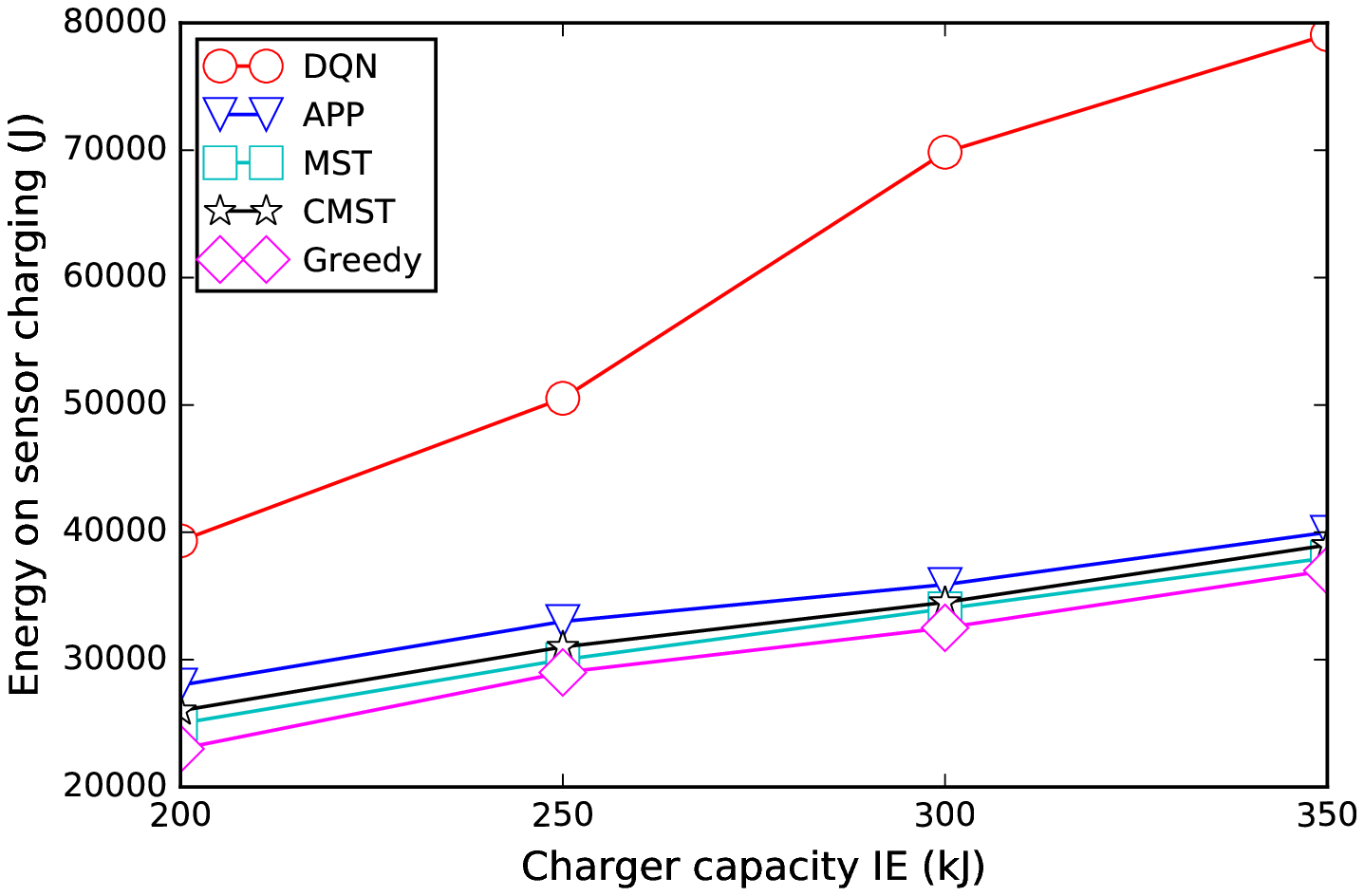} \\
(a) Impact of network size & (b) Impact of charger capacity \\
\end{tabular}
\end{center}
\caption{ (a) Energy spent on sensor charging  for a mobile charger with a total energy capacity $IE = 300$KJ under a varying network size of $n$. (b) Energy spent on sensor charging for a mobile charger with a varying energy capacity $IE$ and a fixed network size $n=120$.
\label{fig:fully_charging}}
\end{figure*}

\begin{table}
\caption{Computing time under different network size $n$ and charger capacity $IE$  \label{tab:fully_charging}}
\begin{center}
    \ra{1.3}
{    
\begin{tabular}[t]{ @{}  c | c | c | c |c  @{}} \toprule
            &  \multicolumn{2}{c |}{Computing time in Fig. \ref{fig:fully_charging}(a)}  & \multicolumn{2}{c}{Computing time in Fig. \ref{fig:fully_charging}(b)}  \\
                \cline{2-3}	\cline{4-5}
                Alg. & $n$   & Comp. Time & $IE$     & Comp. Time \\
                &       & (s)        & (kJ)     &  (s) \\ \hline	
DQN  & \multirow{5}{*}{$50$} &  31  & \multirow{5}{*}{$200$} &  67\\
APP  &  &  35  &  &  90\\
                MST  &  &  0.2  &  &  0.39\\
                CMST  &  &  1.2  &  &  2.35\\	
Greedy & & 0.0005 & & 0.0004\\\hline
                DQN  & \multirow{5}{*}{$100$} &  57 & \multirow{5}{*}{$250$} &  67\\
APP  &  &  506  &  &  102\\
                MST  &  &  0.35  &  &  0.4\\
                CMST  &  &  2.45  &  &  2.37\\	
Greedy & & 0.0007 & & 0.0004\\\hline
                DQN  & \multirow{5}{*}{$150$} &  87  & \multirow{5}{*}{$300$} &  68\\
APP  &  &  1005  &  &  605\\
                MST  &  &  0.46  &  &  0.42\\
                CMST  &  &  3.56  &  &  2.95\\	
Greedy & & 0.0009 & & 0.0004\\\hline
                DQN  & \multirow{5}{*}{$200$} &  115  & \multirow{5}{*}{$350$} &  70\\
APP  &  &  3500  &  &  807\\
                MST  &  &  0.58  &  &  0.49\\
                CMST  &  & 4.28  &  &  2.97\\	
Greedy & & 0.001 &  & 0.0004\\
            \bottomrule
\end{tabular}
}
\end{center}
\end{table}

\subsection{Optimal k-coverage Charging Problem}\label{Sec:Simulation3}

We evaluate the DQN algorithm designed based on the proposed framework and compare with the dynamic programming based algorithm and three other heuristic algorithms including Ant Colony System (ACS) based algorithm, Random algorithm, and Greedy algorithm, as explained briefly in Sec.~\ref{Sec:Comparison_Algorithms}. 

We compare the computing time to find a feasible charging path and the energy of a charger spent on traveling of these algorithms. Considering network settings may affect the performance, we study the impact of varying network size $n$, coverage requirement $k$, and remaining energy threshold $\alpha$  in Secs.~\ref{Sec:Size}, ~\ref{Sec:K}, and~\ref{Sec:Percentage}, respectively. Specifically, we assume a monitored area is at least $k$-coverage initially. Sensor node $v_i$ estimates the charging deadline $D_i$  based on its residual energy $B_i(t_0)$ and the experimental energy consumption rate in~\cite{Zhu:2009:LES:1555816.1555849}. 

\subsubsection{Simulation Setting}

We set up an Euclidean square $[500, 500]m^2$ as a simulation area and randomly deploy sensor nodes with size ranging from $32$ to $80$ in the square such that the area is at least $k$-coverage initially where $k$ varies from $2$ to $4$. The sensing range $r$ is $135$m. The base  and service station of charger are co-located in the center of the square. A charger with a starting point from the service station has an average traveling speed $5$m/s and consumes energy $600J/m$ \cite{liang2017approximation}. The battery capacity $B$ of each sensor is $10.8$KJ \cite{liang2017approximation}. The remaining energy threshold $\alpha$ vary from $0.2$ to $0.8$. A sensor sends a charging request before the leaving of the charger from the service station. The sensor will include in the request the estimated energy exhausted time based on its current residual energy and energy consumption rate. To simulate such request, we consider the residual battery of a sensor is a uniform random variable $B_i$ between $(0.54,10.8]$KJ \cite{shi2011renewable}. The energy exhausted time $D_i = B_i/\beta_i$. We choose the energy consumption rate $\beta_i$ from the historical record of real sensors in~\cite{Zhu:2009:LES:1555816.1555849} where the rate is varying according to the remaining energy and arrival charging time to the sensor. The energy transfer rate $r_c$ is $20$W \cite{chen2016charge}. The discredited time step-size is $1$s.

\subsubsection{Comparison Algorithms}\label{Sec:Comparison_Algorithms}

We implement three  heuristic algorithms for comparison: Ant Colony System (ACS) based algorithm, Random algorithm, and Greedy algorithm.

ACS algorithm solves the traveling salesmen problem with an approach similar to the foraging behavior of real ants~\cite{ant1991,dorigo1997ant,Gutjahr:2000:GAS:348599.348602}. Ants seek path from their nest to food sources and leave a chemical substance called pheromone along the paths they traverse. Later ants sense the pheromone left by earlier ones and tend to follow a trail with a stronger pheromone. Over a period of time, the shorter paths between the nest and food sources are likely to be traveled more often than the longer ones. Therefore, shorter paths accumulate more pheromone, reinforcing these paths.

Similarly, ACS algorithm places agents at some vertices of a graph. Each agent performs a series of random moves from current vertex to a neighboring one based on the transition probability of the connecting edge. After an agent has finished its tour, the length of tour is calculated and the local pheromone amounts of edges along the tour are updated based on the quality of the tour. After all agents have finished their tours, the shortest one is chosen and the global pheromone amounts of edges along the tour are updated. The procedure continues until certain criteria are satisfied. When applying ACS algorithm to solve the the traveling salesmen problem with deadline, two local heuristic functions are introduced in~\cite{cheng2007modified} to exclude paths that violate the deadline constraints.

We modify ACS algorithm introduced in~\cite{cheng2007modified} for the optimal $k$-coverage charging problem. Agents start from and end at $v_0$.  Denote $\tau_{ij}(t)$ the amount of global pheromone deposited on edge $\overrightarrow{v_iv_j}$ and $\Delta\tau_{ij}(t)$ the increased  amount at the $t^{th}$ iteration. $\Delta\tau_{ij}(t)$ is defined as
\begin{equation}
\label{eq:Deltatau_ij}
\Delta\tau_{ij}(t) =
\begin{cases}
\frac{1}{L^*}  & \text{if } \overrightarrow{v_iv_j} \in P^*\\
0 & \text{otherwise} \\
\end{cases}
\end{equation}
where $L^*$ is the traveling distance of the shortest feasible tour $P^*$ at the $t^{th}$ iteration. Global pheromone $\tau_{ij}(t)$ is updated according to the following equation:
\begin{equation}
\label{eq:tau_ij}
\tau_{ij}(t) = (1-\theta)\tau_{ij}(t-1) + \theta\Delta\tau_{ij}(t),
\end{equation}
where $\theta$ is the global pheromone decay parameter. The local pheromone is updated in a similar way, where $\theta$ is replaced by a local pheromone decay parameter and $\Delta\tau_{ij}(t)$ is set as the initial pheromone value.

We also modify the stop criteria of one agent such that the traveling path satisfies the requirement of $k$-coverage, or the traveling time of current path is inf, or the agent is stuck at a vertex based on the transition rule.

Random and Greedy algorithms work much more straightforward. The Random algorithm randomly chooses a next node not in the same clique of the existing path and with an outgoing edge from current one to charge. The Greedy algorithm always chooses the nearest node not in the same clique of the existing path with an outgoing edge from current one. Random and Greedy algorithms terminate when they either find a feasible path or are locally stuck. Note that for ACS and Random algorithms, we always run multiple times and choose the best solution.

\subsubsection{Impact of Network Size}\label{Sec:Size}
 
 We set the coverage requirement $k=3$ and the remaining energy threshold $\alpha = 0.45$ for a sensor node to send a charging request. Table~\ref{tab:parameter_n_result} compares the performances including the computation time to find a feasible charging path and the energy spent on traveling when the network size $n$ varies from $48$ to $80$. The energy spent on traveling decreases with an increased $n$ since there are more redundant sensors to maintain the $k$-coverage requirement. Again, the DQN algorithm outperforms all other competing  algorithms. 
 
 \begin{table}
 	\caption{Performance comparison under different sizes of sensor network $n$ \label{tab:parameter_n_result}}
 	\begin{center}
 		\ra{1.3}
 		{    \fontsize{9pt}{9pt}\selectfont
 			\begin{tabular}[t]{ @{}  c | c | c | c |c  @{}} \toprule
 				& & \multicolumn{3}{c }{$k=3$, $\alpha =0.45$}  \\ \cline{3-5}
 				Algorithm & $n$  & Computation & Feasible  & Traveling \\
 				&   & Time  &  Path &  Energy \\
 				&   & (s) & Found  &   (kJ)  \\ \hline	
 				Dynamic  & \multirow{5}{*}{48} & 156700 & Yes & 771 \\
 				DQN  &&   83 & Yes & 771   \\
 				ACS  &&  92 & Yes & 951  \\
 				Random & & 0.0004 & Yes & 2085\\
 				Greedy & & 0.0004 & Yes & 888\\ \hline
 				Dynamic  & \multirow{5}{*}{64} & 455 & Yes    & 702  \\
 				DQN   &  & 20 & Yes    & 702  \\
 				ACS   &  & 19 & Yes    & 702  \\
 				Random & & 0.0003 & No     & --   \\
 				Greedy & & 0.0003 & Yes     & 846   \\ \hline
 				Dynamic   & \multirow{5}{*}{72} & 362 & Yes & 567 \\
 				DQN   & & 32 & Yes & 567\\
 				ACS   & & 57 & Yes & 567 \\
 				Random & & 0.0003 & Yes & 1941 \\
 				Greedy & & 0.0003 & Yes & 810 \\ \hline
 				Dynamic   & \multirow{5}{*}{80} & 268 & Yes & 345 \\
 				DQN   & & 20 & Yes & 345\\
 				ACS   & & 18 & Yes & 345\\
 				Random & & 0.0003 & No & -- \\
 				Greedy & & 0.0003 & Yes & 375\\
 				\bottomrule
 			\end{tabular}
 		}
 	\end{center}
 \end{table}

\subsubsection{Impact of Coverage Requirement} \label{Sec:K}

We set the number of sensor nodes $n= 64$ and the remaining energy threshold $\alpha = 0.45$ for a sensor node to send a charging request. Table~\ref{tab:parameter_k_result} compares the performances including the computation time to find a feasible charging path and the energy spent on traveling when the coverage requirement $k$ varies from $2$ to $4$. The traveling energy increases with the increased $k$ because more sensor nodes need to be charged to satisfy the coverage requirement. The dynamic programming algorithm with an exponentially increased computing time can only find a feasible charging path when $k$ is small.  By contrast, the computing time of the DQN algorithm including its training time grows slowly with the increase of $k$. The ACS algorithm performs better than the random and greedy algorithms, but overall, the performance of the DQN algorithm significantly outperforms all other competing  algorithms.

\begin{table}
	\caption{Performance comparison under different coverage requirement $k$\label{tab:parameter_k_result}}
	\begin{center}
		\ra{1.3}
		{    \fontsize{9pt}{9pt}\selectfont
			\begin{tabular}[t]{ @{}  c | c | c | c |c  @{}} \toprule
				&   & \multicolumn{3}{c}{$n=64$, $\alpha = 0.45$} \\ \cline{3-5}	
				Algorithm & $k$  & Computation& Feasible & Traveling \\
				&   & Time  &  Path &  Energy \\
				&   & (s) & Found  &   (kJ) \\ \hline	
				Dynamic  & \multirow{5}{*}{2} & 0.102 & Yes    & 249 \\
				DQN  &  & 16 & Yes    & 249 \\
				ACS  & &  5  & Yes    & 249 \\
				Random & & 0.0006 & Yes    & 249  \\
				Greedy & & 0.0007 & Yes    & 288 \\ \hline
				Dynamic  & \multirow{5}{*}{3} & 455 & Yes    & 702  \\
				DQN   &  & 20 & Yes    & 702  \\
				ACS   &  & 19 & Yes    & 702  \\
				Random & & 0.0003 & No     & --   \\
				Greedy & & 0.0003 & Yes     & 846   \\ \hline
				Dynamic   & \multirow{5}{*}{4} & -- & --    & -- \\
				DQN   &  & 71 & Yes    & 1089 \\
				ACS   &  & 73 & Yes     & 1188  \\
				Random & & 0.0006 & No     & --  \\
				Greedy & & 0.0005 & Yes     & 1254  \\
				\bottomrule
			\end{tabular}
		}
	\end{center}
\end{table}

\subsubsection{Impact of Remaining Energy Threshold}\label{Sec:Percentage}

Tables~\ref{tab:parameter_alpha_result_1} and~\ref{tab:parameter_alpha_result_2} give the performances of different algorithms with the remaining energy threshold $\alpha$ varying from $0.2$ to $0.8$ under two network settings: $n=32$ and $k=2$, and $n=48$ and $k=3$, respectively. With the increased remaining energy threshold, the traveling energy increases in both network settings. The Random and Greedy algorithms fail to detect a feasible path in many cases. The dynamic programming algorithm runs out of the memory when $\alpha$ is large. Both the ACS and DQN algorithms find feasible paths for all cases. However, the performance of the ACS algorithm decreases with the increase of $\alpha$. The DQN algorithm consistently and significantly outperforms all other comparison algorithms including botht the traveling energy and computing time.

\begin{table}
	\caption{Performance comparison under different remaining energy threshold $\alpha$ when $k=2$, $n = 32$\label{tab:parameter_alpha_result_1}}
	\begin{center}
		\ra{1.3}
		{    \fontsize{9pt}{9pt}\selectfont
			\begin{tabular}[t]{ @{}  c | c | c | c | c @{}} \toprule
				&   & \multicolumn{3}{c}{$k=2$, $n = 32$} \\ 	\cline{3-5}			
				Algorithm & $\alpha$  & Computation & Feasible  & Traveling  \\
				&   & Time  &  Path &  Energy \\
				&   & (s) & Found  &   (kJ)  \\ \hline	
				Dynamic  & \multirow{5}{*}{0.2} & 0.0012 & Yes    & 405\\
				DQN    && 13&  Yes    & 405\\
				ACS    && 3&  Yes    & 405\\
				Random && 0.0002&  No    & --\\
				Greedy && 0.0003&  Yes     & 420\\ \hline
				
				Dynamic  & \multirow{5}{*}{0.4} & 9760&  Yes    & 696\\
				DQN    && 26&  Yes    & 696  \\
				ACS    & & 38 &  Yes    & 855 \\
				Random && 0.0003 & No     & --   \\
				Greedy & & 0.0003&  No     & 891 \\ \hline
				
				Dynamic  &  \multirow{5}{*}{0.6} & 135742 & Yes    & 1071  \\
				DQN    && 116	&  Yes    & 1071  \\
				ACS    & & 232&  Yes    & 2289  \\
				Random & & 0.0006&  No     & --   \\
				Greedy && 0.0004 &  No     & --   \\ \hline
				
				Dynamic  & \multirow{5}{*}{0.8} & -- &  --        & -- \\
				DQN    && 136 & Yes        & 1080 \\
				ACS    && 400 &  Yes         & 2544    \\
				Random && 0.002 &  No         & --   \\
				Greedy & & 0.001& No        & --   \\
				\bottomrule
			\end{tabular}
		}
	\end{center}
\end{table}

\begin{table}
	\caption{Performance comparison under different remaining energy threshold $\alpha$ when $k=3$, $n = 48$\label{tab:parameter_alpha_result_2}}
	\begin{center}
		\ra{1.3}
		{   \fontsize{9pt}{9pt}\selectfont
			\begin{tabular}[t]{ @{}  c | c | c | c | c @{}} \toprule
				&   & \multicolumn{3}{c }{$k=3$, $n = 48$}  \\ 	\cline{3-5}			
				Algorithm & $\alpha$  & Computation & Feasible  & Traveling \\
				&   & Time  &  Path &  Energy \\
				&   & (s) & Found  &   (kJ) \\ \hline	
				Dynamic  & \multirow{5}{*}{0.2} & 0.02 & Yes    & 348 \\
				DQN    && 10 & Yes    & 348 \\
				ACS    && 1 & Yes    & 348 \\
				Random && 0.0003 & No    & -- \\
				Greedy && 0.0002 & Yes    & 348 \\ \hline
				
				Dynamic  & \multirow{5}{*}{0.4} & 145602& Yes & 750 \\
				DQN  &&   81 & Yes & 750   \\
				ACS  &&  90 & Yes & 930  \\
				Random & & 0.0004 & Yes & 2070\\
				Greedy & & 0.0004 & Yes & 870\\ \hline
				
				Dynamic  &  \multirow{5}{*}{0.6} & -- & --    & --\\
				DQN    && 138 & Yes    & 1020\\
				ACS    && 466 & Yes    & 1521   \\
				Random & & 0.001 & No    & --   \\
				Greedy && 0.001 & No    & --   \\ \hline
				
				Dynamic  & \multirow{5}{*}{0.8} & --  & --    & --   \\
				DQN    && 481 & Yes    & 1272   \\
				ACS    && 4200   & Yes    & 4788   \\
				Random && 0.01 & No    & --   \\
				Greedy && 0.02 & No    & --   \\
				\bottomrule
			\end{tabular}
		}
	\end{center}
\end{table}

\section{Conclusions}\label{Sec:Conclusion}

We introduce a deep reinforcement learning-based charger scheduling optimization framework. The biggest advantage of the framework is that a diverse range of domain-specific charger scheduling strategy can be learned automatically from previous experiences, i.e., different graphs with various sizes. A framework also simplifies the complexity of algorithm design for individual charger scheduling optimization problem. We compare the performance of algorithms designed based on the proposed framework with existing ones on a set of representative charger scheduling optimization problems. Extensive simulation and comparison results show that algorithms based on the proposed framework outperform all existing ones and achieve close to optimal results.

\bibliographystyle{unsrtnat}
\bibliography{bio} 

\begin{thebibliography}{33}
\providecommand{\natexlab}[1]{#1}
\providecommand{\url}[1]{\texttt{#1}}
\expandafter\ifx\csname urlstyle\endcsname\relax
  \providecommand{\doi}[1]{doi: #1}\else
  \providecommand{\doi}{doi: \begingroup \urlstyle{rm}\Url}\fi

\bibitem[Liang et~al.(2017)Liang, Xu, Xu, Shi, Mao, and
  Das]{liang2017approximation}
Weifa Liang, Zichuan Xu, Wenzheng Xu, Jiugen Shi, Guoqiang Mao, and Sajal~K
  Das.
\newblock Approximation algorithms for charging reward maximization in
  rechargeable sensor networks via a mobile charger.
\newblock \emph{IEEE/ACM Transactions on Networking}, 25\penalty0 (5):\penalty0
  3161--3174, 2017.

\bibitem[Chen et~al.(2016)Chen, Lin, and Huang]{chen2016charge}
Lin Chen, Shan Lin, and Hua Huang.
\newblock Charge me if you can: Charging path optimization and scheduling in
  mobile networks.
\newblock In \emph{Proceedings of the 17th ACM International Symposium on
  Mobile Ad Hoc Networking and Computing}, pages 101--110. ACM, 2016.

\bibitem[Shi et~al.(2011)Shi, Xie, Hou, and Sherali]{shi2011renewable}
Yi~Shi, Liguang Xie, Y~Thomas Hou, and Hanif~D Sherali.
\newblock On renewable sensor networks with wireless energy transfer.
\newblock In \emph{INFOCOM, 2011 Proceedings IEEE}, pages 1350--1358. IEEE,
  2011.

\bibitem[Xie et~al.(2012)Xie, Shi, Hou, Lou, Sherali, and
  Midkiff]{xie2012renewable}
Liguang Xie, Yi~Shi, Y~Thomas Hou, Wenjing Lou, Hanif~D Sherali, and Scott~F
  Midkiff.
\newblock On renewable sensor networks with wireless energy transfer: The
  multi-node case.
\newblock In \emph{Sensor, mesh and ad hoc communications and networks (SECON),
  2012 9th annual IEEE communications society conference on}, pages 10--18.
  IEEE, 2012.

\bibitem[He et~al.(2014)He, Cheng, Gu, Pan, Zhu, and Liu]{he2014mobile}
Liang He, Peng Cheng, Yu~Gu, Jianping Pan, Ting Zhu, and Cong Liu.
\newblock Mobile-to-mobile energy replenishment in mission-critical robotic
  sensor networks.
\newblock In \emph{INFOCOM, 2014 Proceedings IEEE}, pages 1195--1203. IEEE,
  2014.

\bibitem[Lu et~al.(2015)Lu, Wang, Niyato, Kim, and Han]{lu2015wireless}
Xiao Lu, Ping Wang, Dusit Niyato, Dong~In Kim, and Zhu Han.
\newblock Wireless charging technologies: Fundamentals, standards, and network
  applications.
\newblock \emph{IEEE Communications Surveys \& Tutorials}, 18\penalty0
  (2):\penalty0 1413--1452, 2015.

\bibitem[Li and Jin(2019)]{xuan2018}
Xuan Li and Miao Jin.
\newblock Optimal k-coverage charging problem.
\newblock \emph{arXiv preprint,arXiv:1901.09129}, 2019.

\bibitem[Bello et~al.(2016)Bello, Pham, Le, Norouzi, and
  Bengio]{bello2016neural}
Irwan Bello, Hieu Pham, Quoc~V Le, Mohammad Norouzi, and Samy Bengio.
\newblock Neural combinatorial optimization with reinforcement learning.
\newblock \emph{arXiv preprint arXiv:1611.09940}, 2016.

\bibitem[Khalil et~al.(2017)Khalil, Dai, Zhang, Dilkina, and
  Song]{khalil2017learning}
Elias Khalil, Hanjun Dai, Yuyu Zhang, Bistra Dilkina, and Le~Song.
\newblock Learning combinatorial optimization algorithms over graphs.
\newblock In \emph{Advances in Neural Information Processing Systems}, pages
  6351--6361, 2017.

\bibitem[Dai et~al.(2016)Dai, Dai, and Song]{dai2016discriminative}
Hanjun Dai, Bo~Dai, and Le~Song.
\newblock Discriminative embeddings of latent variable models for structured
  data.
\newblock In \emph{International conference on machine learning}, pages
  2702--2711, 2016.

\bibitem[Ma et~al.(2018)Ma, Liang, and Xu]{ma2018charging}
Yu~Ma, Weifa Liang, and Wenzheng Xu.
\newblock Charging utility maximization in wireless rechargeable sensor
  networks by charging multiple sensors simultaneously.
\newblock \emph{IEEE/ACM Transactions on Networking}, 26\penalty0 (4):\penalty0
  1591--1604, 2018.

\bibitem[Xu et~al.(2020)Xu, Liang, Jia, Kan, Xu, and Zhang]{xu2020minimizing}
Wenzheng Xu, Weifa Liang, Xiaohua Jia, Haibin Kan, Yinlong Xu, and Xinming
  Zhang.
\newblock Minimizing the maximum charging delay of multiple mobile chargers
  under the multi-node energy charging scheme.
\newblock \emph{IEEE Transactions on Mobile Computing}, 2020.

\bibitem[Sutton et~al.(1998)Sutton, Barto, et~al.]{sutton1998reinforcement}
Richard~S Sutton, Andrew~G Barto, et~al.
\newblock \emph{Reinforcement learning: An introduction}.
\newblock MIT press, 1998.

\bibitem[Hornik(1991)]{hornik1991approximation}
Kurt Hornik.
\newblock Approximation capabilities of multilayer feedforward networks.
\newblock \emph{Neural networks}, 4\penalty0 (2):\penalty0 251--257, 1991.

\bibitem[Riedmiller(2005)]{riedmiller2005neural}
Martin Riedmiller.
\newblock Neural fitted q iteration--first experiences with a data efficient
  neural reinforcement learning method.
\newblock In \emph{European Conference on Machine Learning}, pages 317--328.
  Springer, 2005.

\bibitem[Mnih et~al.(2013)Mnih, Kavukcuoglu, Silver, Graves, Antonoglou,
  Wierstra, and Riedmiller]{mnih2013playing}
Volodymyr Mnih, Koray Kavukcuoglu, David Silver, Alex Graves, Ioannis
  Antonoglou, Daan Wierstra, and Martin Riedmiller.
\newblock Playing atari with deep reinforcement learning.
\newblock \emph{arXiv preprint arXiv:1312.5602}, 2013.

\bibitem[Bansal et~al.(2004)Bansal, Blum, Chawla, and
  Meyerson]{bansal2004approximation}
Nikhil Bansal, Avrim Blum, Shuchi Chawla, and Adam Meyerson.
\newblock Approximation algorithms for deadline-tsp and vehicle routing with
  time-windows.
\newblock In \emph{Proceedings of the thirty-sixth annual ACM symposium on
  Theory of computing}, pages 166--174. ACM, 2004.

\bibitem[Yang et~al.(2006)Yang, Dai, Cardei, Wu, and
  Patterson]{Yang06onconnected}
Shuhui Yang, Fei Dai, Mihaela Cardei, Jie Wu, and Floyd Patterson.
\newblock On connected multiple point coverage in wireless sensor networks.
\newblock \emph{Journal of Wireless Information Networks}, 2006, 2006.

\bibitem[Simon et~al.(2007)Simon, Molnár, Gönczy, and
  Cousin]{Simon07dependablek}
Gyula Simon, Miklós Molnár, László Gönczy, and Bernard Cousin.
\newblock Dependable k-coverage algorithms for sensor networks.
\newblock In \emph{Proc. of IMTC}, 2007.

\bibitem[Liggins et~al.(2008)Liggins, Hall, and Llinas]{Handbook2008}
Martin Liggins, David Hall, and James Llinas.
\newblock \emph{Handbook of Multisensor Data Fusion: Theory and Practice,
  Second Edition}.
\newblock CRC Press, 2008.

\bibitem[Z.~Zhou and Gupta(2009)]{VariableRadii2009}
S.R.~Das Z.~Zhou and H.~Gupta.
\newblock Variable radii connected sensor cover in sensor networks.
\newblock \emph{ACM Trans. Senor Networks}, 5\penalty0 (1):\penalty0 8:1--8:36,
  2009.

\bibitem[Bai et~al.(2011)Bai, Yun, Xuan, Chen, and Zhao]{Multiple-Coverage2011}
X.~Bai, Z.~Yun, D.~Xuan, B.~Chen, and W.~Zhao.
\newblock Optimal multiple-coverage of sensor networks.
\newblock In \emph{INFOCOM, 2011 Proceedings IEEE}, page 2498–2506. IEEE,
  2011.

\bibitem[Li et~al.(2015)Li, Luo, Wang, and He]{li2015autonomous}
Feng Li, Jun Luo, Wenping Wang, and Ying He.
\newblock Autonomous deployment for load balancing $ k $-surface coverage in
  sensor networks.
\newblock \emph{IEEE Transactions on Wireless Communications}, 14\penalty0
  (1):\penalty0 279--293, 2015.

\bibitem[Kumar et~al.(2004)Kumar, Lai, and
  Balogh]{Kumar:2004:KMS:1023720.1023735}
Santosh Kumar, Ten~H. Lai, and J\'{o}zsef Balogh.
\newblock On k-coverage in a mostly sleeping sensor network.
\newblock In \emph{Proceedings of the 10th Annual International Conference on
  Mobile Computing and Networking}, MobiCom '04, pages 144--158, 2004.

\bibitem[Hefeeda and Bagheri(2007)]{Hefeeda:2007:RKA:2931311.2931587}
M.~Hefeeda and M.~Bagheri.
\newblock Randomized k-coverage algorithms for dense sensor networks.
\newblock In \emph{INFOCOM, 2007 Proceedings IEEE}, pages 2376--2380, 2007.

\bibitem[Kurs(2007)]{kurs2007power}
Andre Kurs.
\newblock \emph{Power transfer through strongly coupled resonances}.
\newblock PhD thesis, Massachusetts Institute of Technology, 2007.

\bibitem[Savelsbergh(1985)]{savelsbergh1985local}
Martin~WP Savelsbergh.
\newblock Local search in routing problems with time windows.
\newblock \emph{Annals of Operations research}, 4\penalty0 (1):\penalty0
  285--305, 1985.

\bibitem[Alon et~al.(1995)Alon, Yuster, and Zwick]{alon1995color}
Noga Alon, Raphael Yuster, and Uri Zwick.
\newblock Color-coding.
\newblock \emph{Journal of the ACM (JACM)}, 42\penalty0 (4):\penalty0 844--856,
  1995.

\bibitem[Zhu et~al.(2009)Zhu, Zhong, Gu, He, and
  Zhang]{Zhu:2009:LES:1555816.1555849}
Ting Zhu, Ziguo Zhong, Yu~Gu, Tian He, and Zhi-Li Zhang.
\newblock Leakage-aware energy synchronization for wireless sensor networks.
\newblock In \emph{Proceedings of the 7th International Conference on Mobile
  Systems, Applications, and Services}, pages 319--332, 2009.

\bibitem[Colorni et~al.(1991)Colorni, Dorigo, and Maniezzo]{ant1991}
A.~Colorni, M.~Dorigo, and V.~Maniezzo.
\newblock Distributed optimization by ant colonies.
\newblock In \emph{Proceedings of ECAL91 - European Conference on Artificial
  Life}, page 134–142, 1991.

\bibitem[Dorigo and Gambardella(1997)]{dorigo1997ant}
Marco Dorigo and Luca~Maria Gambardella.
\newblock Ant colony system: a cooperative learning approach to the traveling
  salesman problem.
\newblock \emph{IEEE Transactions on evolutionary computation}, 1\penalty0
  (1):\penalty0 53--66, 1997.

\bibitem[Gutjahr(2000)]{Gutjahr:2000:GAS:348599.348602}
Walter~J. Gutjahr.
\newblock A graph-based ant system and its convergence.
\newblock \emph{Future Gener. Comput. Syst.}, 16\penalty0 (9):\penalty0
  873--888, 2000.

\bibitem[Cheng and Mao(2007)]{cheng2007modified}
Chi-Bin Cheng and Chun-Pin Mao.
\newblock A modified ant colony system for solving the travelling salesman
  problem with time windows.
\newblock \emph{Mathematical and Computer Modelling}, 46\penalty0
  (9-10):\penalty0 1225--1235, 2007.

\end{thebibliography}

\end{document}